\definecolor{lines-1}{RGB}{228,26,28}
\definecolor{lines-2}{RGB}{55,126,184}
\definecolor{lines-3}{RGB}{77,175,74}
\definecolor{lines-4}{RGB}{152,78,163}
\definecolor{lines-5}{RGB}{255,127,0}
\definecolor{lines-6}{RGB}{255,255,51}
\definecolor{lines-7}{RGB}{166,86,40}
\definecolor{lines-8}{RGB}{247,129,191}
\definecolor{lines-9}{RGB}{153,153,153}
\definecolor{verylightgrey}{RGB}{200,200,200}
\pgfplotsset{
	%compat=1.12,
	%compat=newest,
	width =\columnwidth, 
	height=.8\columnwidth,
}
\DeclareMathOperator{\E}{\operatorname{\mathbb{E}}}
\newtheorem{definition}{Definition}
\newtheorem{lemma}{Lemma}
\newtheorem{theorem}{Theorem}
\newcommand{\pushright}[1]{\ifmeasuring@#1\else\omit\hfill$\displaystyle#1$\fi\ignorespaces}
\newcommand{\pushleft}[1]{\ifmeasuring@#1\else\omit$\displaystyle#1$\hfill\fi\ignorespaces}
\begin{document}
\title{Direct Localization for Massive MIMO}

\author{Nil~Garcia, Henk~Wymeersch, Erik~G.~Larsson, Alexander~M.~Haimovich, and Martial~Coulon\thanks{N.~Garcia and H.~Wymeersch are with the Department of Signals and Systems, Chalmers University of Technology, Gothenburg, Sweden. E.~G.~Larsson is with  the Division of Communication Systems, Department of Electrical Engineering (ISY), Link\"{o}ping University, Link\"{o}ping, Sweden. A.~M.~Haimovich is with  the Center for  Wireless  Communications   and  Signal  Processing  Research, ECE  Department,  New  Jersey  Institute  of  Technology  (NJIT), Newark, USA.  M.~Coulon is with  the IRIT/INP-ENSEEIHT, University of Toulouse, Toulouse, France.  This research was supported in part, by the European Research Council, under Grant No.~258418 (COOPNET),  the  EU HIGHTS project (High precision positioning for cooperative ITS applications) MG-3.5a-2014-636537. }}

\maketitle

%Textwidth in pt: \printinunitsof{pt}\prntlen{\columnwidth}

\begin{abstract}
Large-scale MIMO systems are well known for their advantages in communications, but they also have the potential for providing very accurate localization thanks to their high angular resolution. A difficult problem arising indoors and outdoors is localizing users over multipath channels.
Localization based on angle of arrival (AOA) generally involves a two-step procedure, where signals are first processed to obtain a user's AOA at different base stations, followed by triangulation to determine the user's position. In the presence of multipath, the performance of these methods is greatly degraded due to the inability to correctly detect and/or estimate the AOA of the line-of-sight (LOS) paths.
To counter the limitations of this two-step procedure which is inherently sub-optimal, we propose a direct localization approach in which the position of a user is localized by jointly processing the observations obtained at distributed massive MIMO base stations.
Our approach is based on a novel compressed sensing framework that exploits channel properties to distinguish LOS from non-LOS signal paths, and leads to improved performance results compared to previous existing methods.
\end{abstract}

\section{Introduction}

\IEEEPARstart{M}{assive} MIMO, a leading 5G technology \cite{rappaport2013millimeter}, relies on the use of a large number of antennas at the base station. It has many advantages in cellular communications, including increased spectral efficiency, high directivity, and low complexity \cite{larsson2014massive,chin2014emerging}. While research on massive MIMO has focused mainly on communications, it is also an enabler for high-accuracy localization \cite{guerra2015position}. For instance, a finger-printing localization solution is proposed in \cite{savic2015fingerprinting} for locating  multiple users by means of  distributed massive MIMO. A personal mobile radar with millimeter-wave massive arrays is proposed in \cite{guidi2016personal} and used for simultaneous localization and mapping (SLAM) in \cite{guidi2014millimeter}. %In this  work, we study the problem of source localization with multiple distributed massive phased-arrays, i.e., a type of massive multiple-input-single-ouput (MISO) system.

MIMO localization has received significant treatment in the technical literature, generally harnessing angle-of-arrival (AOA) estimation. Typically, a source emits a signal, and then in a \emph{two-step localization} approach, the AOAs are measured at all base stations, and then, the source's location is found by triangulation. In benign open-air applications, where such methods are referred to as bearings-only target localization (BOTL) good performance can be observed \cite{gavish1992performance,kaplan2001maximum,douganccay2004passive}.  However, in dense mutipath environments, such as urban areas or inside buildings, the AOA estimates are biased in general. For that reason, 
pure AOA-based techniques \cite{azzouzi2011new} have not been very popular in harsh multipath environments environments, due to large localization errors \cite{savic2015fingerprinting}. Massive arrays offer the possibility of precisely estimating the AOAs of the individual multipath components thanks to their high angular resolution. Nonetheless, measuring multiple AOAs at each base station requires  identification of the AOA of the line-of-sight (LOS) paths. A possible strategy is to select the strongest arrival as LOS \cite{klukas1998line}. However, the LOS path may be damped or obstructed as it is often the case indoors \cite{spencer2000modeling,sen2013avoiding}. Another option is to combine all estimated AOAs in a fusion center and perform data association, but this is an NP-hard problem to which no optimal solution exists \cite{pattipati1992new}.

\begin{figure}
	\begin{tikzpicture}
	\begin{axis}[
		xlabel={x-axis [\si{\meter}]},
		ylabel={y-axis [\si{\meter}]},
		xmin=-50, xmax=50,
		ymin=-50, ymax=50,
		axis line style = {opacity=0},
		legend style={font=\scriptsize,at={(0.5,0.97)},anchor=north},
		legend cell align=left,
		height=.9\columnwidth,
	]
	\addplot graphics[xmin=-50,ymin=-50,xmax=50,ymax=50] {./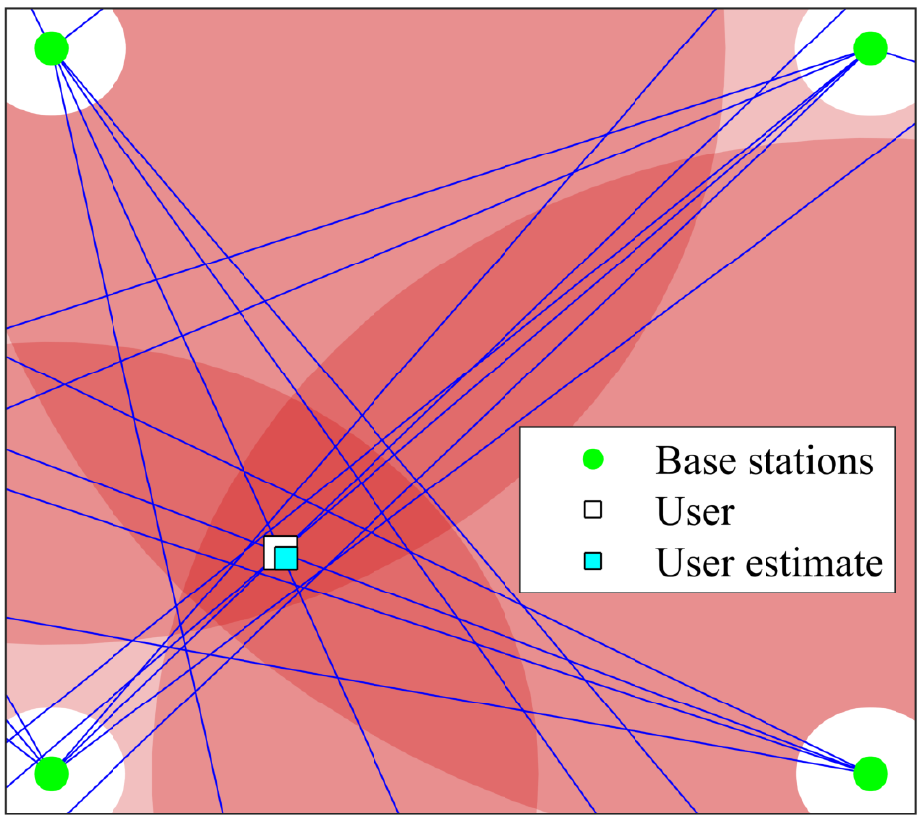};
%	\addplot[forget plot] graphics[xmin=-50,ymin=-50,xmax=50,ymax=50] {./Figures/scenarioC.pdf};
%	\addlegendimage{mark=o,only marks}
%	\addlegendentry{Base stations}
%	\addlegendimage{mark=square,only marks}
%	\addlegendentry{Mobile user}
%	\addlegendimage{mark=square,only marks}
%	\addlegendentry{Mobile user estimate}
	\end{axis}
	\end{tikzpicture}
	\caption{Example scenario. The locations of the base stations are indicated by the green dots and the true location of the source by the white square. The white circle around each base station is the region that is not in the far field according to the Fraunhofer formula. The larger red circles centered around each base station have radius proportional to the estimated TOAs. The intersection of these circles defines a feasible area (dark shaded red) where the source must be positioned. The blue lines represent the true bearing lines of the multipath arrivals at all base stations. Only the bearing lines of the multipath arrivals whose energy at each antenna of the pre-processed signals \eqref{eq:signal_model} is larger than the noise have been plotted.}
	\label{fig:scenario}
\end{figure}

An alternative way to tackle localization problems, is to use a \emph{direct localization} approach \cite{weiss2004direct}. Contrary to traditional techniques, the location of the source is estimated directly from the data, without estimating intermediate parameters, such as the AOAs of the LOS paths. The concept of direct localization was first introduced in \cite{wax1982location,wax1983optimum}, and later applied to AOA-based localization \cite{wax1985decentralized} and, more recently, to hybrid AOA--TOA (time-of-arrival) localization  \cite{weiss2004direct,weiss2005direct}. 
%Already into the 21st century, the work of A.~J.~Weiss et al.\ sparkled new research interest on this topic and proposed some hybrid AOA-TOA(time-of-arrival) methods referred to as Direct Position Determination (DPD) \cite{weiss2004direct,weiss2005direct}. 
However, all these methods were designed for pure LOS environments. Some direct localization techniques \cite{chen2002maximum,bialer2013maximum} targeted to multipath scenarios exist in the literature, but they are not tailored to AOA information and massive arrays.
A requirement of direct localization is that the signals, or a function of them, are sent to a fusion center that estimates the source's locations. In general, it is easier to achieve such a topology indoors as the distances are smaller. In the case of cellular networks, cloud radio access networks (C-RAN) \cite{mobile2011cran,wu2012green} may provide the required infrastructure. C-RAN is a novel architecture for wireless cellular systems whereby the base stations relay the received signals to a central unit which performs all the baseband processing.

In this paper, we propose Direct Source Localization (DiSouL), a novel localization technique that jointly processes the snapshots of data acquired at each base station in order to directly estimate the location of the source. This technique, first introduced in \cite{garcia2014direct}, is 
%Briefly introduced in \cite{garcia2014direct} without mathematical justification, this novel framework 
based on compressive sensing at a fusion center and exploits the fact that LOS components must originate from a common location whereas NLOS components have arbitrary AOAs. 
%This key property enables identification of the LOS components even if they are weaker than the NLOS ones. The emitter can transmit any unsophisticated narrow-band signal, but the system requires a powerful computational center. 
In addition, two variations are presented to lower the computational burden and increase the precision of the position estimate. The first variation uses coarse TOA estimates at each base station to narrow the search area, while the second variation relies on a  grid refinement procedure \cite{malioutov2005sparse}. Finally, to validate the theory, numerous numerical results are provided.

Notation: $\lVert\cdot\rVert_1$, $\lVert\cdot\rVert_2$ (or $\lVert\cdot\rVert$) and $\lVert\cdot\rVert_{2,1}$ denote the $\ell_1$-norm, $\ell_2$-norm and $\ell_{2,1}$-norm, respectively, and $\lVert\cdot\rVert_0$ is the pseudo-$\ell_0$-norm which counts the number of non-zero elements.

\section{System Model}

We consider a two-dimensional scenario with one user (the source, located at $\mathbf{p}=[p^x,p^y]^{\mathrm{T}}$ in an area $\mathcal{R}$) and $L$  massive MIMO base stations with $S_l$ antennas each (located at $\tilde{\mathbf{p}}_{l}=[\tilde{p}^{x}_{l},\tilde{p}^{y}_{l}]^{\mathrm{T}}$, defined as the center of gravity of the associated phased array and assumed to be in the far field with respect to the source). We denote by $\mathbf{a}_l\left(\theta\right)$ the array response vector at base station $l$ for a ray impinging with angle $\theta$.

The source broadcasts a known signal $s(t)$ with half-power bandwidth $B$, which propagates through the multipath environment, resulting in a received signal at base station $l$ given by 
\begin{equation} \label{eq:signal_model_preMF}
	\mathbf{z}_l(t) = \mathbf{z}_l^\text{LOS}(t)+\mathbf{z}_l^\text{NLOS}(t) \qquad 0\leq t <T_\text{obs},
\end{equation}
where
\begin{align}
	\mathbf{z}_l^\text{LOS}(t) &=\alpha_l\, \mathbf{a}_l\left(\theta_l(\mathbf{p})\right) s(t-\tau_l(\mathbf{p})) \label{eq:signal_model_preMF:LOS}\\
	\mathbf{z}_l^\text{NLOS}(t) &=
	\sum_{m=1}^{P_l} \alpha_l^m\, \mathbf{a}_l(\theta_l^m) s(t-\tau_l^m)
	+\mathbf{n}_l(t), \label{eq:signal_model_NLOS}
\end{align}
in which $T_\text{obs}$ is the observation time, each component of $\mathbf{n}_l(t)$ is white Gaussian noise with spectral density $\sigma^2$,
$\alpha_l$  is an unknown complex scalar,  $\theta_l$ and $\tau_l$ are the angle of arrival (AOA) and time of arrival (TOA), all related to the line of sight (LOS) path, while $\alpha_l^m$, $\theta_l^m$, and $\tau_l^m$ are the channel gain, AOA, and TOA of the $m$-th NLOS component, for the $P_l$ NLOS path.  All these parameters are unknown. The signal is narrowband with respect to the arrays, i.e., at each array the amplitudes $\{\alpha\}$ do not change across antennas. The LOS parameters $\tau_l(\mathbf{p})$ and $\theta_l(\mathbf{p})$ are related to the source position through
\begin{align}
\tau_l(\mathbf{p}) & = \Vert \mathbf{p} - \tilde{\mathbf{p}}_{l} \Vert /c \\
\theta_l(\mathbf{p}) & =\arctan \left(\frac{p^{y}-\tilde{p}^{y}_{l}}{p^{x}-\tilde{p}^{x}_{l}}\right)
	+\pi\cdot\mathbbm{1}\left(p^{x}<\tilde{p}^{x}_{l}\right),
\end{align}
where $c$ is the speed of light, while the range of the arctangent function is ${-\pi}/{2}\leq\arctan(x)<+{\pi}/{2}$, the angle is computed with respect to the $x$-axis and anticlockwise, and $\mathbbm{1}(\mathsf{P})$ is one if the logical expression $\mathsf{P}$ is true\footnote{The summand $\pi\cdot\mathbbm{1}\left(p^{x}<\tilde{p}^{x}_{l}\right)$ is added for resolving the ambiguity caused by the fact that $\arctan({y}/{x})=\arctan({-y}/{-x})$.}.

We generate a discrete-time observation, by applying a matched filter\footnote{Performing the matched filtering requires perfect knowledge of the pulse shape $s(t)$. In practice, if the antennas and hardware have an entirely all-pass (frequency-flat) frequency response, then the signal-to-noise-ratio will decrease but the number of multipath components will remain the same.} and sampling at time $t_l$, leading to 
\begin{equation} \label{eq:signal_model}
	\begin{split}
		\bar{\mathbf{z}}_l &= \int_{0}^{T_\text{obs}} s^*(t-t_l)  \mathbf{z}_{l}(t) \, \textrm{d}t \\ 
		&=\bar{\alpha}_l\, \mathbf{a}_l\left(\theta_l(\mathbf{p})\right)
		+\sum_{m=1}^{P_l} \bar{\alpha}_l^m\, \mathbf{a}_l\left(\theta_l^m\right)
		+\bar{\mathbf{n}}_l
		%\qquad 0\leq t <T_\text{obs},
	\end{split}
\end{equation}
 where $\bar{\alpha}_l = r_s(t_l-\tau_l(\mathbf{p})){\alpha}_l$, $\bar{\alpha}_l^m = r_s(t_l-\tau_l^m){\alpha}_l^m$ and $r_s(t) = \int_{0}^{T_\text{obs}} s^*(t-t_l)s(t) \,\textrm{d}t$ is the emitted signal autocorrelation. The pulse energy is normalized to one, i.e., $r_s(0)=1$. Therefore, $\bar{\mathbf{n}}_l \sim \mathcal{N}(\mathbf{0},\sigma^2 \mathbf{I})$. The signals in \eqref{eq:signal_model} are the input of the proposed method. In order to ensure that $\bar{\alpha}_l \neq 0$, the signals must be sampled at a time where the energy of the LOS pulse is not zero (i.e., while $r_s(t_l-\tau_l(\mathbf{p}))\neq0$). In addition to the sampling times, we will also compute an upper bound on the TOA of the LOS paths at each base station that will enhance the proposed method, which for brevity are simply called TOA estimates.   In a nutshell, the objective of our work is to determine the sampling times and the TOA estimates from $\mathbf{z}_l(t) \}^{L}_{l=1}$, and then, determine $\mathbf{p}$ from $\{ \bar{\mathbf{z}}_l \}^{L}_{l=1}$.

\section{Proposed Method}

\subsection{Principle} \label{sec:principle}

The proposed method exploits the high angular resolution of massive arrays, enabling detection and estimation of the AOAs of the distinct multipath arrivals. The TOA estimates are not used for precise localization, but rather as constraints, limiting the source location in a convex set (See Fig.~\ref{fig:scenario} for a visual example). By processing the snapshots $\bar{\mathbf{z}}_l$ at all base stations jointly, we are able to separate the LOS paths from the NLOS paths. 
Roughly speaking, the procedure of DiSouL is as follows. First, determine a coarse, positively biased estimate of the TOA at each base station.
The TOA estimates define a convex set containing the source. 
Second, using the signal model \eqref{eq:signal_model}, we formulate a convex optimization problem, directly providing an estimate of $\mathbf{p}$. 
In contrast to  indirect approaches, where AOAs are estimated first and the source location is determined afterwards, we do not have to  deal with an NP-hard data-association problem. We bypass this problem, as only position $\mathbf{p}$ is estimated and not AOAs of the LOS paths. While the TOA estimates need to be an upper bound on LOS TOAs, i.e., positively biased, the sampling should happen at an instant where the energy of the LOS arrival is maximized with respect to the energy from the NLOS arrivals. Thus, in general the sampling times will be smaller than the TOA estimates.

 \subsection{TOA Estimation and Sampling} \label{sec:timing}

 The proposed method requires computing the instants $\{t_l\}_{l=1}^L$ at which the outputs of the matched filters are sampled \eqref{eq:signal_model}, along with the TOA estimates $\{\hat{\tau}_l\}_{l=1}^L$. 
 Here, a delay estimation technique will be described for computing these time measurements. Nonetheless, other delay estimation techniques are possible and may lead to better results.
 The desired properties of the sampling times and TOA estimates are different, and consequently, in this work they are computed differently. 
 Because NLOS components are treated as interference, the sampling time shall be picked such that the ratio between the energy of the LOS component and the aggregated energy of the NLOS components is maximized. %In regards to the TOAs estimates, as it will be explained in Section~\ref{sec:TOA}, it is desirable that they are positively biased with high probability.
 %First, we focus on the estimation of the TOAs.
 
 \subsubsection{TOA Estimation} \label{sec:TOA}
 
We rely on a generalization of the threshold MF \cite{dardari2008threshold} for multiple antennas. Let $z^\text{NC}_{l}(\tau)$ be the non-coherent aggregation of the observed signals at all antennas after matched filtering:
 \begin{equation}
 z^\text{NC}_{l}(\tau) = \Big\| \int_{0}^{T_\text{obs}} s^*(t-\tau)  \mathbf{z}_{l}(t)\, \textrm{d}t \Big\|^2.
 \end{equation}
 The TOAs $\{\hat{\tau}_l\}_{l=1}^L$ are estimated by selecting the first peak\footnote{A \emph{peak} is a local maximum of $z^\text{NC}_{l}(\tau)$.} that exceeds a certain threshold:
 \begin{equation} \label{eq:threshold_MF}
 \hat{\tau}_l=\operatorname{find-1st-peak} \left\{z^\text{NC}_{l}(\tau) : z^\text{NC}_{l}(\tau) \geq \eta\right\}.
 \end{equation}
 In practice, values of $z^\text{NC}_{l}(\tau)$ may only be available at discrete instants. In such case the location of the peak may be obtained by parabolic fitting \cite{cespedes1995methods}.
 Following \cite{dardari2008threshold}, the threshold is selected so that the probability of early false alarm is very low. An early false alarm event is defined as detecting a peak due to noise before the true TOA of the LOS path. The probability of early false alarm can be approximated by \cite{dardari2008threshold}  
 \begin{equation} \label{eq:threshold_estimate}
 P_{\textrm{FA}} \approx 1 +
 \frac{\left(1-\exp\left(-\frac{\eta}{S_l \,\sigma^2}\right)\right)^{{T_\text{obs}}/{T_\text{corr}}}-1}
 {\exp\left(-\frac{\eta}{S_l \,\sigma^2}\right)^{{T_\text{obs}}/{T_\text{corr}}}}.
 \end{equation}
 The value of threshold $\eta$ resulting in the desired $P_{\textrm{FA}}$ can be found by performing a one-dimensional search of \eqref{eq:threshold_estimate}. 
 Generally speaking, for many types of waveforms, the correlation time is well approximated by the inverse of the bandwidth: $T_\text{corr}={1}/{B}$.

 \subsubsection{Sampling Time}
 
 The received signals at all antennas of base station $l$ \eqref{eq:signal_model} are sampled at time $t_l$. Contrary to the estimation of the TOAs, the goal is to sample at an instant where there is as little as possible NLOS interference and as much as possible energy from the LOS component. Hence, we propose to use the same threshold matched filter for TOA estimation, but instead of selecting the time of the first peak, we select the instant when the received signal crosses the threshold for the first time, i.e.,
 \begin{equation} \label{eq:threshold_MF_1stCross}
 t_l = \min \left\{\tau : z^\text{NC}_{l}(\tau) \geq \eta\right\}.
 \end{equation}

\subsection{Localization}

To solve the localization problem, we rely on tools from compressive sensing. Specifically, we propose a grid-based approximate solution to the the problem of localizing a source on a continuous map which exploits the notion of sparsity and row-sparsity \cite{tropp2006algorithms}. To this end,
first, we introduce a uniform grid of $Q$ locations  
\begin{equation} \label{eq:grid_locs}
	\mathcal{L}=\left\{
	\boldsymbol{\bm{\pi}}_1,\ldots,\boldsymbol{\bm{\pi}}_Q
	\right\} \subset \mathcal{R},
\end{equation}
and a uniform grid of $M_l$ angles for each base station array
\begin{equation} \label{eq:grid_angles}
	\mathcal{A}_l =\left\{
	\vartheta_1,\ldots,\vartheta_{M_l}
	\right\} \subset [0,2\pi).
\end{equation}
The main assumption here is that the source is positioned on a grid location, and that the AOAs of the NLOS paths are also in the grid of angles.
Let $\mathbf{X} \in \mathbb{C}^{Q \times L}$ be a matrix whose entry on row $q$, column $l$ is denoted by $x_{ql}$ and represents the complex gain of a LOS path from grid location $\boldsymbol{\bm{\pi}}_q$ to base station $l$. Let $y_{ml}$ be the complex gain of a NLOS path arriving at the $l$-th base station with angle $\vartheta_m$. 
Then, by definition, only one row in $\mathbf{X}$ is different from zero, and $y_{ml}\neq 0$ only if $\vartheta_m$ is equal to the AOA of a NLOS path at base station $l$. Thus, if the grids are dense enough, $\mathbf{X}$ is row-sparse and $\mathbf{y}_l$ is sparse for all $l$. It is well known in the compressive sensing literature \cite{jacob2009group}, that row sparsity can be induced by minimizing the $\ell_{2,1}$-norm, i.e., $\Vert  \mathbf{X} \Vert_{2,1} = \sum_{q=1}^{Q} \sqrt{\sum_{l=1}^{L} \left| x_{ql} \right|^2}$, and that sparsity can be induced by minimizing the $\ell_1$-norm, i.e., $\Vert \mathbf{y}_l \Vert_1=  \sum_{m=1}^{M_l} \left| y_{ml} \right|$ where $\mathbf{y}_l=[y_{1l},\ldots,y_{M_ll}]^\mathrm{T}$. Thus, with all this in mind, we propose to solve the following optimization problem
\begin{subequations} \label{opt:problem}
	\begin{align}
		\label{opt:problem:objective}
		\min_{\mathbf{X},\mathbf{y}_l} \quad&
		w \Vert  \mathbf{X} \Vert_{2,1} + \sum_{l=1}^L \Vert \mathbf{y}_l \Vert_1 \\
		\label{opt:problem:constraint}
		\text{s.t.}
		\quad&
		\sum_{l=1}^{L}
		\left\|
		\bar{\mathbf{z}}_l - \hat{\mathbf{z}}_l
		\right\|^2
		\leq \epsilon
		\\ \quad&
		\hat{\mathbf{z}}_l =		
		\sum_{q=1}^{Q} x_{ql} \mathbf{a}_l\left(\theta_l(\boldsymbol{\bm{\pi}}_q)\right)
		+\sum_{m=1}^{M} y_{ml} \mathbf{a}_l\left(\vartheta_m\right), \forall l. \label{opt:problem:signal_estimated}
		%\\ \quad& 		\qquad\qquad\qquad\qquad\qquad\qquad{l=1,\ldots,L}. \nonumber
	\end{align}
\end{subequations}
This is a second-order cone program (SOCP) for which very efficient solvers exist. Intuitively, it seeks the sparsest number of source locations and NLOS paths that can explain the observations \eqref{eq:signal_model}.  The vector $\hat{\mathbf{z}}_l$ as defined  in \eqref{opt:problem:signal_estimated} is a reconstruction of $\bar{\mathbf{z}}_l$ for a given choice of $\mathbf{X}$ and $\left\{\mathbf{y}_l\right\}_{l=1}^{M_l}$. 
The parameter $\epsilon$ establishes the maximum allowed mismatch between the observations and the reconstruction. The parameter $w$ ensures that not all signal energy is assigned to only LOS or NLOS components. Suitable choices for $\epsilon$ and $w$ will be proposed below.

\subsubsection*{Remark} While this technique searches a source on a plane, it can be generalized to a three-dimensional search, at a cost of computational complexity. It is also possible that the technique may improve its robustness against multipath because in two dimensions two distinct NLOS bearing lines always intersect but in three dimensions they generally do not.

\section{Parameter Selection}
In this section, we motivate the choice for $\epsilon$ and $w$. The choices do not guarantee recovery of the correct position and are derived under simplified assumptions. 

\subsection{Setting the Parameter $\epsilon$}

The parameter $\epsilon$ in \eqref{opt:problem} defines the allowed mismatch between the observations and the reconstruction. Typically, $\epsilon$ is a bound on the noise. Since the noise is Gaussian, it is unbounded, and instead $\epsilon$ is chosen so that the received signals in absence of noise are part of the feasible set with high probability, i.e.,
\begin{equation}
	\operatorname{Prob}
	\left(\sum_{l=1}^{L}\left\|\bar{\mathbf{z}}_l-\hat{\mathbf{z}}_l\right\|^2 \leq \epsilon  \right) = \gamma,
\end{equation}
where $\gamma$ is, for example, 0.99. Substituting the observations by their expression \eqref{eq:signal_model}, we obtain an expression that only depends on the noise
\begin{equation}
	\operatorname{Prob}\left(\sum_{l=1}^{L}\left\|\bar{\mathbf{n}}_l\right\|\leq\epsilon\right) = \gamma.
\end{equation}
Because $\bar{\mathbf{n}}_l$ are random white Gaussian vectors of length $S_l$, it follows that the
error normalized by the noise variance $2\sigma^{-2}\sum_{l=1}^{L}\|\bar{\mathbf{n}}_l\|$ is a Chi-squared random variable with $2\sum_{l=1}^{L}S_l$ degrees of freedom. Let $F_k(x,k)$ be the cumulative distribution function of the Chi-squared distribution with $k$ degrees of freedom evaluated at $x$ and $F^{-1}(y,k)$ its inverse function evaluated at $y$. Then, the value of $\epsilon$ can be computed as
\begin{equation} \label{eq:param_epsilon}
	\epsilon = \frac{\sigma^2}{2}F^{-1}\left(\gamma,2\sum_{l=1}^{L}S_l\right).
\end{equation}

In low SNR conditions, it is possible that the aggregated energy of all snapshots is not larger than $\epsilon$, i.e.,
\begin{equation}
	\sum_{l=1}^{L}\left\|\bar{\mathbf{z}}_l\right\|^2 \leq \epsilon,
\end{equation}
making problem \eqref{opt:problem} have the trivial all-zeros solution, thus, failing to estimate the location of the source. In cases we propose to simply look up the location whose LOS components correlate the most with the snapshots:
\begin{equation} \label{eq:ML_LOS}
	\hat{\mathbf{p}} = \arg \max_{\boldsymbol{\bm{\pi}}\in\mathcal{L}} \sum_{l=1}^{L}\frac{\left|\mathbf{a}_l^\textrm{H}\left(\theta_l(\boldsymbol{\bm{\pi}})\right)\bar{\mathbf{z}}_l\right|^2}{\left\|\mathbf{a}_l\left(\theta_l(\boldsymbol{\bm{\pi}})\right)\right\|^2}.
\end{equation}

\subsection{Setting the Parameter $w$}
In order to obtain an expression for $w$, we will not  prove that the AOAs are correctly recovered by solving \eqref{opt:problem}, but rather that, under proper selection of $w$, if the AOAs are correctly recovered, then they can also be correctly identified as either LOS or NLOS. 
%This section shows that under a specific set of assumptions on ..., correct source recovery is guaranteed if the weight w satisfies $\sqrt(L-1) < w < \sqrt(L)$.
The key property that will dictate the value of $w$, and in turn estimate the correct source location is based on the following definition.
\begin{definition}[consistent location] \label{def:consistency}
	A location $\boldsymbol{\bm{\pi}}$ is consistent with $L$ paths, if the AOAs of the direct paths between $\bm{\pi}$ and the base stations are true AOAs, i.e.,
	\begin{equation} \label{eq:consistency}
		\theta_{l}(\boldsymbol{\bm{\pi}}) \in \Theta_l \quad\text{for }l=1,\ldots,L,
	\end{equation}
	where $\Theta_l$ is the set of true AOAs at base station $l$
	\begin{equation} \label{eq:setTrueAOAs}
	\Theta_l=
	\left\{\theta_l(\mathbf{p}),\theta_l^1,\ldots,\theta_l^{P_l}\right\}.
	\end{equation}
\end{definition}
By definition, the true source location $\mathbf{p}$ is consistent with the $L$ paths because the LOS components travel in a straight line. To find a criterion for the weight, we restrict ourselves to a simplified version of the problem and then later evaluate the criterion in a more realistic setting. %Albeit the following list of assumptions is probably impractical, it allows us to develop a criterion for the weight $w$. It turns out, as shown in the simulations, that this choice of weight works very well in the more realistic setup's including dense multipath, finite grids and noise. 
Our analysis on the weight criterion is limited through the three following assumptions.
\begin{description}
	\item[A1)] Besides the source location $\mathbf{p}$, no other location is consistent with $L$ paths. \label{assum:consistency}
	\item[A2)] The grids $\mathcal{L}$ and $\{\mathcal{A}_l\}_{l=1}^L$ are sufficiently dense to contain the source location $\mathbf{p}$ and all AOAs, respectively. 
	\item[A3)] \label{assum:correctAOAs} Denoting by $\hat{\Theta}_l$ the estimated AOAs at base station $l$, i.e., 
	\begin{equation} \label{eq:setEstimatedAOAs}
		\hat{\Theta}_l=
		\left\{\vphantom{\vartheta_{ml} : y_{ml}\neq0,\vartheta\in[0,2\pi)}\theta_l(\boldsymbol{\bm{\pi}}_q):x_{ql}\neq0\right\}
		\cup 
		\left\{\vartheta_{ml} : y_{m l}\neq0\right\},
	\end{equation}
then $\hat{\Theta}_l={\Theta}_l,  \forall l$. In other words, the solution of \eqref{opt:problem} is able to recover the true AOAs. This assumption is reasonable in high SNR conditions. 
\end{description}

%is a strong assumption, in the absence of noise it is reasonable if the number of true AOAs is low enough. In absence of noise, it is set $\epsilon=0$, and constraint \eqref{opt:problem:signal_estimated} boils down to
%\begin{equation} \label{eq:constraint_no_noise}
%\bar{\mathbf{z}}_l = \hat{\mathbf{z}}_l \quad{l=1,\ldots,L}.
%\end{equation}
%According to the identifiability study of \cite{wax1989unique}, under some mild assumptions on the geometry of the array, the solution to \eqref{eq:constraint_no_noise} is unique if $|\Theta_l|,|\hat{\Theta}_l|<\frac{S_l+1}{2}$, or unique with probability one if $|\Theta_l|,|\hat{\Theta}_l|<\nicefrac{2S_l}{3}$. The number of true AOAs $|\Theta_l|$ is low by assumption. The number of estimated AOAs $|\hat{\Theta}_l|$, from its own definition \eqref{eq:setEstimatedAOAs}, satisfies $|\hat{\Theta}_l|\leq \sum_{\boldsymbol{\bm{\pi}}\in\mathcal{R}}\mathbbm{1}(x_{\boldsymbol{\bm{\pi}}l}\neq0) +\sum_{\vartheta\in[0,2\pi)}\mathbbm{1}(y_{\vartheta l}\neq0)$. 
%Since problem \eqref{opt:problem} induces a sparse number of estimated locations and estimated angles, it is likely that $|\hat{\Theta}_l|$ is low as well.

\begin{lemma} \label{lemma:proof:minLpaths} Assume A2) and A3). If $w > \sqrt{L-1}$,
%	\begin{equation} \label{eq:minLpaths}
%		w > \sqrt{L-1},
%	\end{equation}
	then any estimated location output by problem \eqref{opt:problem} is consistent with $L$ paths (in the sense of Definition~\ref{def:consistency}). 
\end{lemma}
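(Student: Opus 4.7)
The plan is to argue by contradiction, showing that any row of $\mathbf{X}$ in the optimizer whose grid point fails to be consistent can be strictly improved by moving its mass into the NLOS variables. Suppose the optimum contains a row $q^*$ with at least one nonzero entry such that $\bm{\pi}_{q^*}$ is \emph{not} consistent; by Definition~\ref{def:consistency} there exists some base station $l^*$ with $\theta_{l^*}(\bm{\pi}_{q^*})\notin\Theta_{l^*}$. By A3) we then must have $x_{q^*l^*}=0$, because otherwise the angle $\theta_{l^*}(\bm{\pi}_{q^*})$ would appear in $\hat\Theta_{l^*}$ but not in $\Theta_{l^*}$. Hence the index set $\mathcal{I}=\{l:x_{q^*l}\neq 0\}$ has cardinality at most $L-1$.

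Next I would define a perturbed feasible point in which the offending row is zeroed out and its contributions are reallocated to NLOS components. For each $l\in\mathcal{I}$, assumption A3) guarantees $\theta_l(\bm{\pi}_{q^*})\in\Theta_l$, and then A2) guarantees this angle lies in the grid $\mathcal{A}_l$, so there is an index $m_l$ with $\vartheta_{m_l}=\theta_l(\bm{\pi}_{q^*})$. Replace $x_{q^*l}$ by $0$ and $y_{m_l l}$ by $y_{m_l l}+x_{q^*l}$. Because the steering vectors $\mathbf{a}_l(\theta_l(\bm{\pi}_{q^*}))$ and $\mathbf{a}_l(\vartheta_{m_l})$ coincide, the reconstructions $\hat{\mathbf{z}}_l$ are unchanged, so the constraint \eqref{opt:problem:constraint} remains satisfied.

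Finally I would compare the two objective values. The $\ell_{2,1}$ cost of row $q^*$ drops by exactly $w\sqrt{\sum_{l\in\mathcal{I}}|x_{q^*l}|^2}$. The $\ell_1$ cost of the NLOS variables can grow by at most $\sum_{l\in\mathcal{I}}|x_{q^*l}|$ thanks to the triangle inequality applied to each modified coordinate. Using the standard inequality $\|v\|_1\le\sqrt{|\mathcal{I}|}\,\|v\|_2\le\sqrt{L-1}\,\|v\|_2$ on the vector $v=(x_{q^*l})_{l\in\mathcal{I}}$, the net change in the objective is at most
\begin{equation*}
(\sqrt{L-1}-w)\sqrt{\sum_{l\in\mathcal{I}}|x_{q^*l}|^2},
\end{equation*}
which is strictly negative when $w>\sqrt{L-1}$. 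This contradicts optimality, proving the lemma.

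I do not expect a serious obstacle here: the argument is a local perturbation combined with a tight Cauchy--Schwarz bound. The only subtlety worth highlighting in the written proof is the careful use of A3) to force $x_{q^*l^*}=0$ at the inconsistent base station (so that the "bad" row has support of size at most $L-1$, not $L$), since this is precisely what produces the factor $\sqrt{L-1}$ in the threshold and would not go through if the support could be as large as $L$.
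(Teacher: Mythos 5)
Your proof is correct and follows essentially the same route as the paper's: a contradiction argument that zeroes out the offending row, reallocates its mass to the corresponding NLOS angle variables (feasible since the steering vectors coincide), and compares costs via the triangle inequality and $\|v\|_1\le\sqrt{\|v\|_0}\,\|v\|_2$ with support at most $L-1$. The only difference is framing—you start from ``not consistent'' and use A3) to force a zero entry at the inconsistent base station, while the paper assumes $\|\mathbf{x}_1\|_0<L$ directly and invokes A3) at the end—and your explicit use of A2)/A3) to guarantee the receiving grid angles exist is in fact a slightly more careful rendering of the same exchange.
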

\begin{proof}
See Appendix~\ref{proof:minLpaths}. 
\end{proof}

\begin{lemma} \label{lemma:min1source}
	Assume A2) and A3). If  $w < \sqrt{L}$,
%	\begin{equation} \label{eq:min1source}
%		w < \sqrt{L},
%	\end{equation}
	then problem \eqref{opt:problem} outputs at least one location (i.e., $\mathbf{X} \neq \mathbf{0}$). 
\end{lemma}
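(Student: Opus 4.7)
The plan is to argue by contradiction: suppose the optimum $(\mathbf{X}^\star, \{\mathbf{y}_l^\star\})$ of~\eqref{opt:problem} satisfies $\mathbf{X}^\star = \mathbf{0}$, and exhibit a feasible perturbation with strictly smaller cost. Combining A3 with $\mathbf{X}^\star = \mathbf{0}$ forces each $\hat{\Theta}_l$ to coincide with the support of $\mathbf{y}_l^\star$, so for every $l$ there exists an index $m_l^\star$ with $\vartheta_{m_l^\star} = \theta_l(\mathbf{p})$ and $u_l := y^\star_{m_l^\star l} \neq 0$; by A2, $\mathbf{p} = \boldsymbol{\bm{\pi}}_{q_0}$ for some $q_0$, hence $\theta_l(\boldsymbol{\bm{\pi}}_{q_0}) = \vartheta_{m_l^\star}$ for every $l$.

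The crucial step, and the place where care is needed to obtain the threshold $\sqrt{L}$ rather than a weaker bound, is to use a \emph{partial, magnitude-equalising} transfer of the LOS coefficients from $\{\mathbf{y}_l^\star\}$ into row $q_0$ of $\mathbf{X}$. Setting $u_{\min} = \min_l |u_l| > 0$ and $t_l = u_{\min}/|u_l| \in (0,1]$, I would define $(\mathbf{X}',\{\mathbf{y}_l'\})$ by $x'_{q_0 l} = t_l u_l$ (all other rows of $\mathbf{X}'$ zero), while $\mathbf{y}_l'$ coincides with $\mathbf{y}_l^\star$ except that its $m_l^\star$-th entry is replaced by $(1-t_l)u_l$. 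Because $\mathbf{a}_l(\theta_l(\boldsymbol{\bm{\pi}}_{q_0})) = \mathbf{a}_l(\vartheta_{m_l^\star})$, every $\hat{\mathbf{z}}_l$ is unchanged and~\eqref{opt:problem:constraint} still holds.

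By design $|t_l u_l| = u_{\min}$ for all $l$, so $\|\mathbf{X}'\|_{2,1} = \sqrt{L}\, u_{\min}$ and $\sum_l \|\mathbf{y}_l'\|_1 = \sum_l \|\mathbf{y}_l^\star\|_1 - L\,u_{\min}$. The change in the objective is
\begin{equation*}
\Delta \;=\; w\sqrt{L}\,u_{\min} - L\,u_{\min} \;=\; \sqrt{L}\,(w-\sqrt{L})\,u_{\min},
\end{equation*}
which is strictly negative whenever $w < \sqrt{L}$, contradicting the supposed optimality of $\mathbf{X}^\star = \mathbf{0}$ and proving $\mathbf{X}^\star \neq \mathbf{0}$. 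The main obstacle is precisely this choice of perturbation: a naive \emph{full} transfer ($t_l = 1$) would yield only $\Delta = w\|\mathbf{u}\|_2 - \|\mathbf{u}\|_1$, whose sign is governed by the ratio $\|\mathbf{u}\|_1/\|\mathbf{u}\|_2 \in [1,\sqrt{L}]$ and therefore cannot deliver the clean threshold; equalising the transferred magnitudes saturates the Cauchy--Schwarz bound and produces the sharp condition $w<\sqrt{L}$.
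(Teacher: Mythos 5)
Your proof is correct and follows essentially the same route as the paper's: contradiction from $\mathbf{X}=\mathbf{0}$, using A2)--A3) to locate the LOS angle in the support of each $\mathbf{y}_l$, and then transferring an equal magnitude $\min_l|y_{1l}|$ (with matching phase) from each LOS coefficient into one row of $\mathbf{X}$, which lowers the objective exactly when $w<\sqrt{L}$. The paper's choice $|x'_{1l}|=\min_l|y_{1l}|$, $\angle x'_{1l}=\angle y_{1l}$ is precisely your magnitude-equalising transfer, so the two arguments coincide.
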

\begin{proof}
See Appendix~\ref{proof:min1source}.
\end{proof}

The two lemmas lead directly to the following theorem, which guarantees the correct recovery of the source location.
\begin{theorem} \label{thrm:weight}
	If Assumptions {A1)}, {A2)}, and {A3)} hold, then a sufficient condition for the correct recovery of the source location is 
	\begin{equation}
		\sqrt{L-1}<w<\sqrt{L}.
	\end{equation}
\end{theorem}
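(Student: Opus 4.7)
The theorem is essentially a corollary of the two preceding lemmas combined with Assumption A1), so the plan is straightforward: I would argue by combining the three ingredients and showing that the only consistent candidate location is $\mathbf{p}$ itself.

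The plan is as follows. First, I would pick any $w$ satisfying $\sqrt{L-1} < w < \sqrt{L}$ and let $(\mathbf{X}^\star, \{\mathbf{y}_l^\star\}_{l=1}^L)$ be any optimal solution of problem \eqref{opt:problem}. Since $w < \sqrt{L}$, Lemma~\ref{lemma:min1source} (which only requires A2 and A3) guarantees $\mathbf{X}^\star \neq \mathbf{0}$, so at least one row of $\mathbf{X}^\star$ is nonzero and hence at least one grid location $\boldsymbol{\pi}_q \in \mathcal{L}$ is returned by the optimizer.

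Next, since $w > \sqrt{L-1}$, Lemma~\ref{lemma:proof:minLpaths} (which also only requires A2 and A3) implies that every such returned location $\boldsymbol{\pi}_q$ (i.e., every $q$ with a nonzero row in $\mathbf{X}^\star$) is consistent with $L$ paths in the sense of Definition~\ref{def:consistency}, meaning $\theta_l(\boldsymbol{\pi}_q) \in \Theta_l$ for all $l = 1,\ldots,L$. At this point I would invoke Assumption A1), which states that no location other than the true source $\mathbf{p}$ is consistent with $L$ paths; combined with the fact that by construction $\mathbf{p}$ is itself consistent (its LOS rays travel in straight lines to each base station), this forces $\boldsymbol{\pi}_q = \mathbf{p}$ for every nonzero row. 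Under A2), $\mathbf{p}$ lies on the grid, so this identification is meaningful.

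Putting the two together: the set of returned locations is nonempty (from Lemma~\ref{lemma:min1source}) and is a subset of $\{\mathbf{p}\}$ (from Lemma~\ref{lemma:proof:minLpaths} and A1), so it equals $\{\mathbf{p}\}$ exactly, which is the correct recovery of the source location. There is no real obstacle here — the substantive work is hidden in the two lemmas, and the theorem itself is a one-line deduction. The only care needed is to note that the hypotheses $w > \sqrt{L-1}$ and $w < \sqrt{L}$ can be simultaneously satisfied (since $L \geq 1$), so the interval is nonempty, and to state explicitly where each of A1), A2), A3) is used.
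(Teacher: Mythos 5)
Your proposal is correct and follows essentially the same route as the paper's own proof: Lemma~\ref{lemma:min1source} gives $\mathbf{X}\neq\mathbf{0}$ from $w<\sqrt{L}$, Lemma~\ref{lemma:proof:minLpaths} gives consistency with $L$ paths from $w>\sqrt{L-1}$, and Assumption A1) then pins the returned location(s) to $\mathbf{p}$. The extra remarks on the nonemptiness of the interval and on A2) placing $\mathbf{p}$ on the grid are fine but not a departure from the paper's argument.
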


\begin{proof}
	If $w<\sqrt{L}$, by Lemma~\ref{lemma:min1source} al least one estimated location is output by problem~\eqref{opt:problem}.
	Moreover, if $w>\sqrt{L-1}$, by Lemma~\ref{lemma:proof:minLpaths} any estimated location is consistent with $L$ paths. However, according to Assumption {A1)}, only the location of the source is consistent
	with $L$ paths, thus completing the proof.
\end{proof}

\subsection{The Cases of Obstructed-LOS and Non-LOS}

In practice, LOS paths may be attenuated or blocked (leading to obstructed-line-of-sight (OLOS) or non-line-of-sight (NLOS), respectively). The proposed technique relies on the presence of the $L$ LOS paths for achieving high precision localization, and it may break down when the base stations are in NLOS. Similarly, OLOS base stations will receive LOS components that are attenuated and may pass undetected. If the weight in \eqref{opt:problem:objective} is chosen according to Lemma~\ref{lemma:proof:minLpaths}, any location estimate output by problem \eqref{opt:problem} must be consistent with $L$ paths. However, if one base station is in NLOS, then the source will only be consistent with $L-1$ paths, and therefore, the location of the source will not be a solution to \eqref{opt:problem}. Thus, adjusting the weight requires a priori knowledge of the number of LOS base stations. We can adjust the weight as follows.  Let  $L^*$ be the number of base stations in LOS  with the source, and let $\hat{L}$ be an estimate of $L^*$. Obviously, $L^*\leq L$. Furthermore, assume no other location besides the location of the source is consistent with $L^*$ paths. We start by assuming that all base stations are in LOS, i.e., $\hat{L}=L$, set the weight according to Theorem~\ref{thrm:weight}, and solve problem \eqref{opt:problem}. According to Lemma~\ref{lemma:proof:minLpaths}, the location of the source will be estimated only if it is consistent with $\hat{L}$ paths. If $L^*< \hat{L}$, the solver will return $\mathbf{X}=\mathbf{0}$. When this event is detected, $\hat{L}$ can be reduced and \eqref{opt:problem} solved again. This procedure can be repeated as shown in Algorithm~\ref{alg:NLOS_BSs}. Note that if $\mathbf{X}\neq\mathbf{0}$, the location with strongest gains is returned as the estimate $\hat{\mathbf{p}}$ (see lines 10--11). 

Based on the choices for $\epsilon$ and $w$, Algorithm~\ref{alg:NLOS_BSs} summarizes the proposed solution strategy.

\begin{algorithm} \caption{Direct Localization} \label{alg:NLOS_BSs}
	\begin{algorithmic}[1]
		
		%\State given the grids $\mathcal{L}$ and $\mathcal{A}$
		
		%\State given the nodes locations $\{\tilde{\mathbf{p}}_{l}\}_{l=1}^L$ and the relative locations of the antennas $\{\tilde{\mathbf{p}}_{l,a}\}$
		
		%\State given the snapshots of data $\left\{\bar{\mathbf{z}}_l\right\}_{l=1}^L$ and the noise variance $\sigma^2$
		
		\State set $\hat{L}=L$ and $\hat{\mathbf{p}}=\emptyset$
		
		\State set $\epsilon$ according to \eqref{eq:param_epsilon}
		
		\If{$\sum_{l=1}^{L}\left\|\bar{\mathbf{z}}_l\right\|^2 > \epsilon$}
		
			\While{$\hat{\mathbf{p}}=\emptyset$ and $\hat{L}>1$}
			
			\State set $w=\sqrt{\hat{L}-0.5}$
			
			\State solve  \eqref{opt:problem} to obtain $\mathbf{X}$ and  $\mathbf{y}_l, \forall l$ \label{line:solve}%, inputs: $\{\tilde{\mathbf{p}}_{l}\}$, $\{\tilde{\mathbf{p}}_{l,a}\}$, $\mathcal{L}$, $\mathcal{A}$, $\left\{\bar{\mathbf{z}}_l\right\}_{l=1}^L$, $w$ and $\epsilon$, outputs: $\{x_{ql}\}$ 
			
			%\If{$x_{ql}=0$ for all $q$ and $l$}
			\If{$(\mathbf{X} \equiv \mathbf{0})$}
			
			\State $\hat{L} \leftarrow \hat{L}-1$
			
			\Else
			
%			\State $\hat{q} = \operatorname{arg\,max}_q\, \left\|\mathbf{x}_q\right\|$
						\State $\hat{q} = \operatorname{arg\,max}_q\, \left\|\mathbf{X}_{q,:}\right\|$
			
			\State $\hat{\mathbf{p}} = \boldsymbol{\bm{\pi}}_{\hat{q}}$
			
			\EndIf
			
			\EndWhile
			
		\Else
		
			\State estimate $\hat{\mathbf{p}}$ by \eqref{eq:ML_LOS}
		
		\EndIf

	\end{algorithmic}
\end{algorithm}

\section{Improving Computational Time and Precision}
The performance of the proposed method is determined by the quality of the sampling times and the density of the grids. The latter also relates directly to the computational complexity. In this section, we provide improvements to the basic algorithm with respect to these two 
aspects.

\subsection{TOA Assistance}\label{sec:TOAassist}

Assuming all TOA estimates are positively biased, then we can create a set 
\begin{align}
\mathcal{F} = \{ \bm{\pi} \in \mathbb{R}^2: \left\|\mathbf{\boldsymbol{\bm{\pi}}}-\mathbf{p}_{l}\right\| \leq c \, \hat{\tau}_l,  \forall l \}
\end{align}
and finally use $\mathcal{L} \cap \mathcal{F}$ instead of $\mathcal{L}$ in Algorithm~\ref{alg:NLOS_BSs}. In case $\mathcal{L} \cap \mathcal{F} = \emptyset$, a new grid can be generated in $\mathcal{F}$. 

In the unlikely event that not all TOA estimates are positively biased, it is possible that $\mathcal{F}=\emptyset$. In such a case, we expand $\mathcal{F}$ by increasing all TOA estimates by a constant value $v$ until $\mathcal{F}\neq \emptyset$. We have chosen $v=1/B$, where $B$ is the signal bandwidth, though the value of $v$ turns out to not be critical for the localization performance.

\subsection{Grid Refinement}

Dense grids of locations and angles are necessary to achieve fine resolution, but making the grids too dense results in large computation time. The computational complexity of solving \eqref{opt:problem} scales as $\mathcal{O}((QL+\sum_l M_l)^{3.5})$ \cite{lobo1998applications}, where $Q$ and $M$ are the number grid locations and angles respectively, and $L$ is the number of base stations. This motivates an adaptive grid-refinement strategy originally proposed in \cite{malioutov2005sparse}. The idea behind the grid refinement approach is to start with a coarse grid of locations and angles; subsequently, the grid is refined around the estimated locations and angles and the optimization problem \eqref{opt:problem} is solved again. This procedure can be repeated until a certain grid resolution has been achieved or a stopping criterion has been met. Thus, the benefits of grid refinement are two-fold: lower computational complexity and fine grid resolution.

In comparison to previous grid refinement approaches \cite{malioutov2005sparse,hyder2010direction}, ours is more complex due to the two different types of grids used to describe the observed data. At iteration $k$ of the grid refinement process, we will denote the position grid by $\mathcal{L}^{(k)}$ and the angle grid (for base station $l$) by $\mathcal{A}^{(k)}_l$.
At iteration $k=0$, the grids are uniform over $\mathcal{R}$ and $[0,2 \pi)$, respectively. The resolutions in $\mathcal{L}^{(0)}$ and $\mathcal{A}^{(0)}_l$ are set to ${\pi}_{\text{res}} \in \mathbb{R}_+$ and $\vartheta_\text{res}\in \mathbb{R}_+$, respectively. %At iteration $k>0$, we are given a set of positions $\hat{\Pi}$ and a set of angles $\hat{\Theta}_l$.
Define the following operators: 
\begin{align}
\mathrm{grid}(\hat{\Pi},\delta)=&\{ \bm{\pi} \in \mathbb{R}^2 : \bm{\pi} =\hat{\bm{\pi}} + [i~j]^{\mathrm{T}}  \delta,\\
& \hat{\bm{\pi}} \in \hat{\Pi} ,i,j \in \{-2,-1,0,1,2\}\} \nonumber \\
\mathrm{grid}(\hat{\Theta}_l,\delta)=& \{ \theta \in [0,2\pi): \theta = \hat{\theta} + i \delta,\\
&  \hat{\theta} \in \hat{\Theta}_l, i \in \{-2,-1,0,1,2\}\}. \nonumber
\end{align}
We can then set $\mathcal{L}^{(k)} = \mathrm{grid}(\hat{\Pi},{\pi}_{\text{res}}/2^{k})$ and $\mathcal{A}^{(k)}_l = \mathrm{grid}(\hat{\Theta}_l,\vartheta_\text{res}/{2^k}) \cup 	\{\left[\theta_l\left(\boldsymbol{\bm{\pi}}\right)\right]_{\vartheta_\text{res}/{2^k}}  :  \boldsymbol{\bm{\pi}}\in \hat{\Pi} \}$, where $[x]_{y} $ rounds $x$ to the nearest multiple of $y$. Each successive grid of locations and angles includes the estimated points and their neighboring points. In this case we have chosen to include twenty-four and four neighbour points for the position grid and angle grid, respectively, but other choices of neighbours are possible as well. In addition, the grid of angles also incorporates the angles related to the estimated locations. It has been empirically verified that this is necessary for the correct performance of this grid refinement approach.

Because at each step the previously estimated points are included in the next grid, the solution at step $k$ is a feasible solution at step $k+1$. This ensures that the optimum value of the optimization problem \eqref{opt:problem} cannot increase as iterations progress. Since the objective function is bounded from below by zero, by the monotone convergence theorem \cite{schechter1996handbook}, the grid refinement procedure must converge. In practice, the refinement process is halted when the progress between two consecutive steps is negligible. Denote as $f_\text{opt}^{(k)}$ the optimum value of problem \eqref{opt:problem} at step $k$, then the grid refinement is stopped at step $k$ if
\begin{equation} \label{eq:stop_criterion}
	\frac{|f_\text{opt}^{(k-1)}-f_\text{opt}^{(k)}|}{f_\text{opt}^{(k-1)}} < \beta,
\end{equation}
where $\beta$ is a small value, e.g., $\beta=10^{-3}$.
\begin{algorithm} \caption{Grid refinement} \label{alg:grid_refinement}
	\begin{algorithmic}[1]
		
		\State given a coarse grids of locations $\mathcal{L}^{(0)}$ and angles $\mathcal{A}_l^{(0)}, \forall l$
		
		%\State given the base stations locations $\{\tilde{\mathbf{p}}_{l}\}_{l=1}^L$ and the relative locations of the antennas $\{\tilde{\mathbf{p}}_{l,a}\}$
		
		%\State given the snapshots of data $\left\{\bar{\mathbf{z}}_l\right\}_{l=1}^L$, $w$ and $\epsilon$

		\State set $k=0$
		
				\While{\eqref{eq:stop_criterion} not satisfied}

			\State solve  \eqref{opt:problem} with $\mathcal{L}=\mathcal{L}^{(k)}$ and $\mathcal{A}_l=\mathcal{A}_l^{(k)}$ %, inputs: $\{\tilde{\mathbf{p}}_{l}\}$, $\{\tilde{\mathbf{p}}_{l,a}\}$, $\mathcal{L}^{(k)}$, $\left\{\mathcal{A}_l^{(k)}\right\}_{l=1}^L$, $\left\{\bar{\mathbf{z}}_l\right\}_{l=1}^L$, $w$ and $\epsilon$, outputs: $f_\text{opt}^{(k)}$, $\{x_{ql}^{(k)}\}$ and $\{y_{ml}^{(k)}\}$
			
			\State extract  locations $\hat{\Pi} =\{ \boldsymbol{\bm{\pi}}_q^{(k)} \in \mathcal{L}^{(k)} : \|\mathbf{x}_{q}^{(k)}\|\neq0  \}$
			
			\State extract  angles $\hat{\Theta}_{l} =\{ \vartheta_{ml}^{(k)} \in \mathcal{A}_l^{(k)} : y_{ml}^{(k)}\neq0 \}, \forall l$ 
			\State increase $k$			
			\State set $\mathcal{L}^{(k)}=\mathrm{grid}(\hat{\Pi},{\pi}_{\text{res}}/2^{k})$
			\State trim grid of locations $\mathcal{L}^{(k)}$ through TOA assistance
			\State set
			\begin{equation}
			\mathcal{A}^{(k)}_l = \mathrm{grid}(\hat{\Theta}_l,\vartheta_\text{res}/{2^k}) \cup 	\{\left[\theta_l\left(\boldsymbol{\bm{\pi}}\right)\right]_{\vartheta_\text{res}/{2^k}}  :  \boldsymbol{\bm{\pi}}\in \hat{\Pi} \}	 \nonumber					
			\end{equation}
		\EndWhile

	\end{algorithmic}
\end{algorithm}

\subsection{The DiSouL Algorithm}

The summary of the  DiSouL algorithm is now presented, comprising the basic Algorithm~\ref{alg:NLOS_BSs}, as well as the TOA assistance and grid refinement.

\begin{algorithm}[H] \caption{DiSouL} \label{alg:DiSouL}

	\begin{algorithmic}[1]
		
		\State set $\eta$ using \eqref{eq:threshold_estimate} for the desired $P_{\textrm{FA}}$
		
		\State estimate TOAs $\{\hat{\tau}_l\}_{l=1}^L$ using \eqref{eq:threshold_MF}
		
		\State create initial grid of locations $\mathcal{L}^{(0)}$ and angles $\mathcal{A}_{l}^{(0)}, \forall l$ %through \eqref{eq:grid_locations_initial} and \eqref{eq:grid_angles_initial}, respectively
		
		\State trim grid of locations through TOA assistance %$\mathcal{L}^{(0)}$ with Algorithm~\ref{alg:TOA_assistance}
		
		\State compute sampling times $\{t_l\}_{l=1}^L$ using \eqref{eq:threshold_MF_1stCross}
		
		\State obtain the snapshots of data by applying MF and sampling at instants $\{t_l\}_{l=1}^L$ as in \eqref{eq:signal_model}
		
		\State estimate source location $\hat{\mathbf{p}}$ by Algorithm~\ref{alg:NLOS_BSs} where line~\ref{line:solve} is replaced with Algorithm~\ref{alg:grid_refinement}
		%inputs: $\mathcal{L}^{(0)}$, $\left\{\mathcal{A}_{l}^{(0)}\right\}_{l=1}^L$, $\{\tilde{\mathbf{p}}_{l}\}_{l=1}^L$, $\{\tilde{\mathbf{p}}_{l,a}\}$, $\left\{\bar{\mathbf{z}}_l\right\}_{l=1}^L$ and $\sigma^2$, outputs: $\hat{\mathbf{p}}$
		
	\end{algorithmic}
\end{algorithm}

\section{Numerical Results}

In this section, we illustrate the performance of the localization method and compare it to other existing techniques. Unless otherwise stated, all numerical examples are run using the following parameters. The source is positioned randomly within an area of size \SI{100x100}{\meter}. Four base stations are positioned at the corners; if the origin of the coordinate system is taken to be in the middle of the area, the base stations are at coordinates [\SI{45}{\meter},\SI{45}{\meter}], [\SI{45}{\meter},\SI{-45}{\meter}], [\SI{-45}{\meter},\SI{45}{\meter}] and [\SI{-45}{\meter},\SI{-45}{\meter}]. Every base station is equipped with a 100-antenna circular random array \cite{ochiai2005collaborative} of radius $5\lambda$. In a random circular array all antennas are placed uniformly at random inside a disk, which lies in the same plane as the search area. The carrier frequency is \SI{7}{\giga\hertz}. We opt for circular random arrays instead of uniform linear arrays (ULAs) because for the same number antennas and similar inter antenna spacing, their far field region starts at a much shorter distance (at \SI{8.6}{\meter} instead of  \SI{51.4}{\meter} for a ULA) \cite{std1979antennas}. 
The initial grid resolutions of DiSouL are ${\pi}_{\text{res}}=\SI{5}{\meter}$ and $\vartheta_\text{res}=\SI{5.71}{\degree}$.

\subsection{Validation of Theorem \ref{thrm:weight}}

\begin{figure}
	\centering
	\includegraphics[width=.48\columnwidth]{./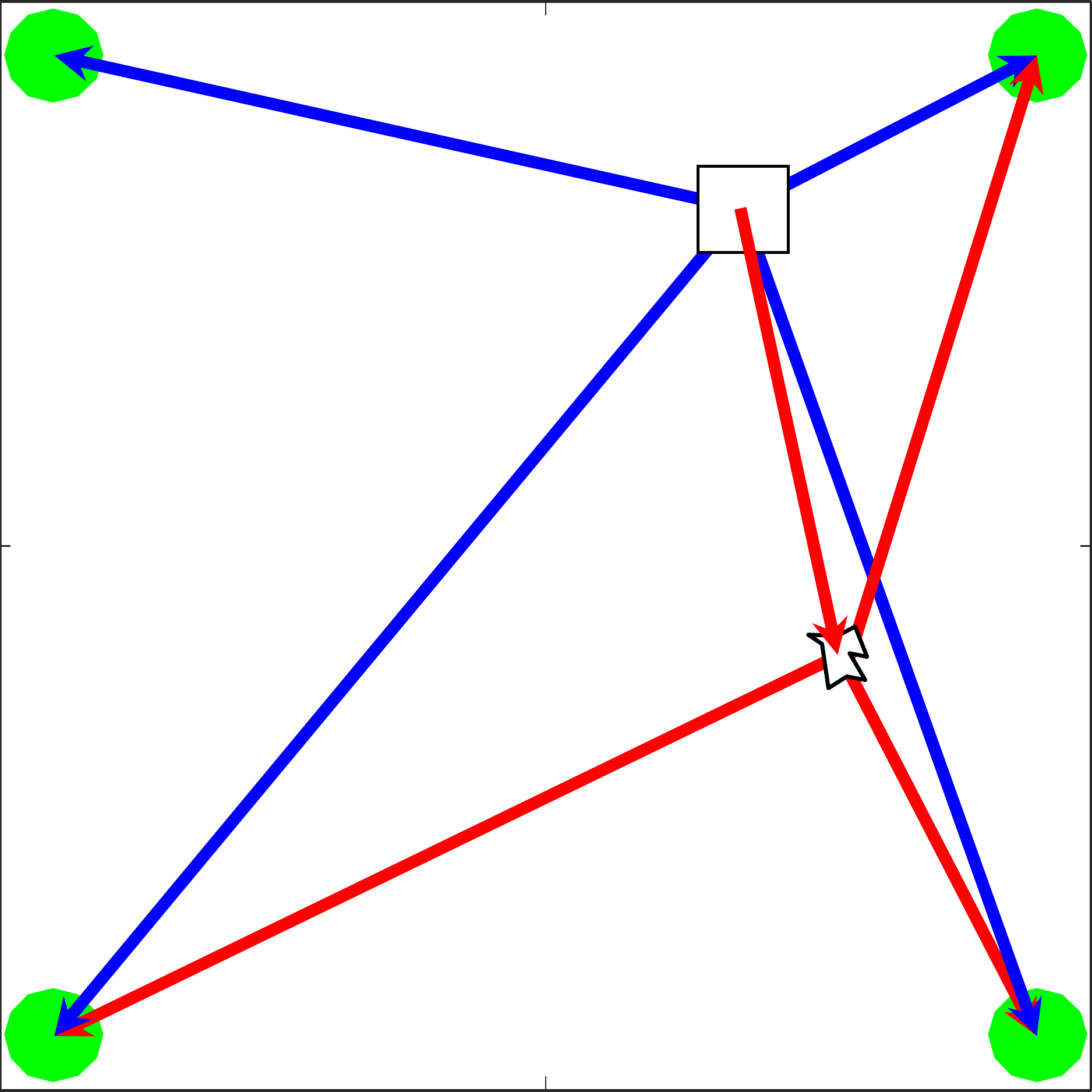}
	\hfill
	\includegraphics[width=.48\columnwidth]{./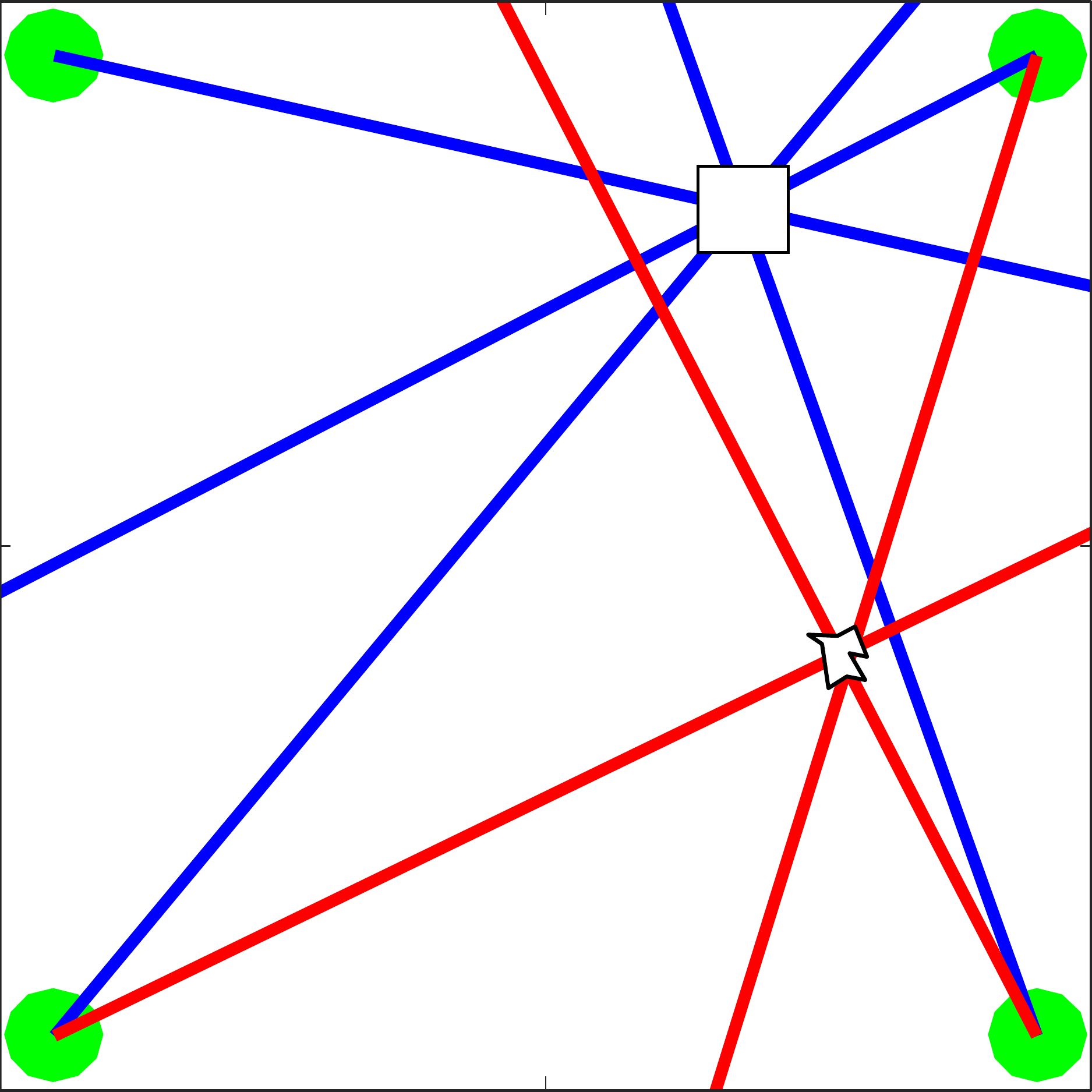}
	\caption{All base stations receive a LOS component from the source. In addition, all base stations except for the one on the top left corner, receive a NLOS component which resulted from bouncing on a reflector. The left figure plots the LOS and NLOS paths. The right figure plots the bearing lines at all base stations. In blue and red the LOS and NLOS components, respectively.}
	\label{fig:scenario_simple}
\end{figure}

\begin{figure}
	\begin{tikzpicture}
		\begin{axis}[
			xlabel={$w^2$},
			ylabel={Probability of sub-meter accuracy},
			xmin=0, xmax=5,
			ymin=0, ymax=1.05,
			legend entries={$\textrm{SNR}=\SI{0}{\decibel}$,$\textrm{SNR}=\SI{10}{\decibel}$,$\textrm{SNR}=\SI{20}{\decibel}$},
			legend pos=north west,
			legend cell align=left,
			cycle list name=myCycleList,
		]
		%\addplot[fill=verylightgrey,draw=none] coordinates {(3,0) (3,2) (4,2) (4,0)} \closedcycle;
		\addplot+[mark=none] table[
			x=wSquare,
			y=0dB,
		] {./Data/data_WeightSelection.dat};
		\addplot+[mark=none] table[
			x=wSquare,
			y=10dB,
		] {./Data/data_WeightSelection.dat};
		\addplot+[mark=none] table[
			x=wSquare,
			y=20dB,
		] {./Data/data_WeightSelection.dat};
		\end{axis}
	\end{tikzpicture}
	\caption{Probability of sub-meter accuracy versus the choice of the weight for the scenario in Fig.~\ref{fig:scenario_simple}. Probability estimated by Monte Carlo simulation where the random parameters are the signal strengths and phases.}
	\label{fig:weight_analysis}
\end{figure}
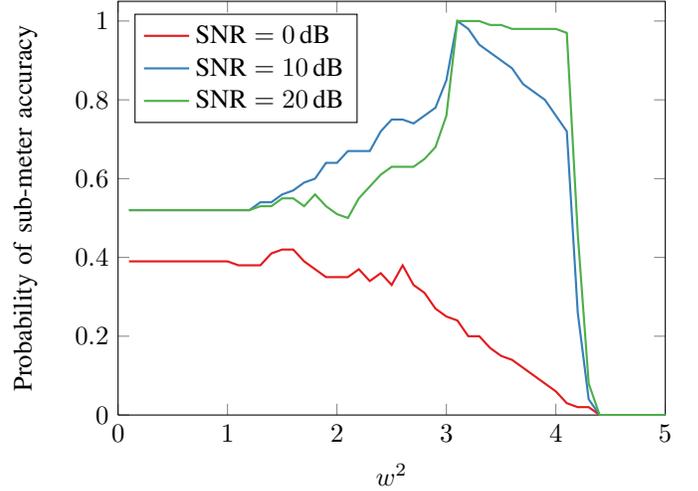
To illustrate Theorem~\ref{thrm:weight}, we synthesize a set of snapshots \eqref{eq:signal_model} according to the scenario plotted in Fig.~\ref{fig:scenario_simple} and ignore any time delay information. The source is positioned at [\SI{18}{\meter},\SI{31}{\meter}] and a reflector is positioned at [\SI{25}{\meter},\SI{-7}{\meter}].  As visualized in Fig.~\ref{fig:scenario_simple}, all base stations receive a LOS component, and except for the top left base station, they also receive a NLOS component bounced from the reflector.
From Fig.~\ref{fig:scenario_simple}, it is apparent that the source location is consistent with 4 paths, the reflector is consistent with 3 paths, and all other locations in that area are consistent with 2 paths or less. We hypothesize that for a sufficiently fine grid, and for a sufficiently high SNR, the probability of recovering the correct source location will be high if the weight is picked according to Theorem~\ref{thrm:weight}. The $\textrm{SNR}_l$ for the snapshots \eqref{eq:signal_model} is defined as $\textrm{SNR}_l={S_l\E\{|\bar{\alpha}_l|^2\}}/{\E\{\|\bar{\mathbf{n}}_l\|^2\}}={S_l\E\{|\bar{\alpha}_l|^2\}}/{\sigma^2}$ and is equal for all base stations $\textrm{SNR}_l=\textrm{SNR}$.
A Monte Carlo simulation is performed where at each run the signal strengths and phases of all multipath components are randomized according to Rayleigh and uniform distributions, respectively. The location of the source is estimated by running Algorithm~\ref{alg:grid_refinement}, wherein the solution to optimization problem \eqref{opt:problem} is obtained by the solver Mosek \cite{Mosek2016}.
Fig.~\ref{fig:weight_analysis} plots the empirical probability that the localization error is smaller than \SI{1}{\meter} as a function of $w^2$. 
According to Theorem~\ref{thrm:weight}, a sufficient condition for recovering the location of the source is that the square of the weight satisfies $L-1\leq w^2 \leq L$. The figure shows that, in this case, for $L=4$, the range of values $w^2\in[3,4]$ yields the correct source location with sub-meter accuracy with probability close to 1 for a sufficient high SNR.

\subsection{Localization Performance in Realistic Multipath Channel}

In this section, DiSouL is compared to indirect localization techniques: 
\begin{itemize}
\item SR-LS \cite{beck2008exact}, using TOA obtained by the time delay estimator of Section~\ref{sec:TOA}. 
\item IV \cite{douganccay2004passive}, using AOA information, obtained by applying beamforming \cite{godara1997application} on the snapshots \eqref{eq:signal_model} and selecting the angle associated with the strongest peak. 
\item The Stansfield estimator \cite{gavish1992performance}, using  hybrid TOA-AOA.
\item DPD \cite{weiss2004direct}, a direct localization hybrid TOA-AOA technique, operating directly on the received signals \eqref{eq:signal_model_preMF}--\eqref{eq:signal_model_NLOS}.
\end{itemize}
More sophisticated techniques for estimating AOAs such as MUSIC \cite{schmidt1986multiple}
are not applicable because they require multiple snapshots and break down in the presence of multiple correlated arrivals such as is the case of multipath. A high precision alternative to beamforming is $\ell_1$-SVD \cite{malioutov2005sparse}. However, we have observed in our numerical results that $\ell_1$-SVD performs similar to beamforming due to the fact that the AOA estimation errors are caused by peak ambiguities and not the lack of angular resolution. Thus, errors happen mostly when the LOS component is attenuated or blocked by obstacles.

The source emits a Gaussian pulse $s(t)$ at \SI{7}{\giga\hertz} carrier frequency. We simulate the received signal at each antenna after down-conversion to baseband and sampling. An oversampling factor of 3 is used. It is a assumed a half power bandwidth of $B = \SI{30}{\mega\hertz}$ and ${E}/{N_0}=\SI{20}{\decibel}$, where $E=\E|\alpha_l|^2$ is the energy of the received LOS component before sampling \eqref{eq:signal_model_preMF:LOS} (same energy for all $l$) and $N_0=\sigma^2$ is the noise spectral density at each antenna.
All parameters in the received signals \eqref{eq:signal_model_preMF} are generated according to the statistical indoor multipath channel in \cite{spencer2000modeling}. The configuration is that of the Clyde building: cluster decay rate is \SI{34}{\nano\second}, ray decay rate is \SI{29}{\nano\second}, cluster arrival rate is {1}/{\SI{17}{\nano\second}}, ray arrival rate is {1}/{\SI{5}{\nano\second}} and angular variance is \SI{26}{\degree}. On the average, at every base station, 99.9\% of the energy in the snapshot \eqref{eq:signal_model} is contained in 8 discrete multipath arrivals, which, in general, have closely spaced AOAs.

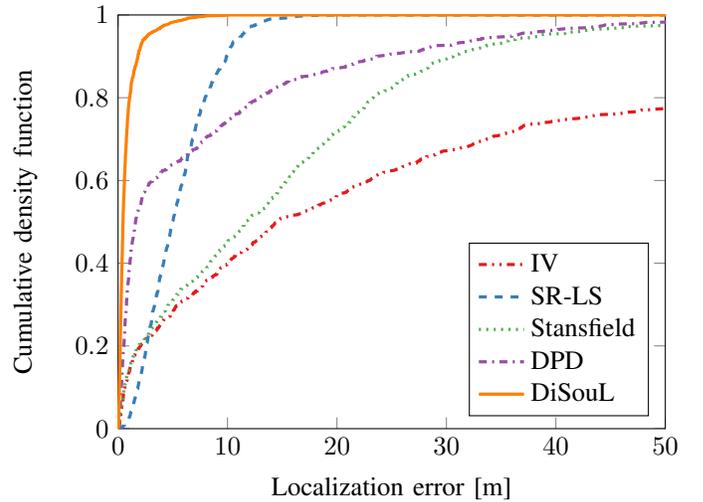
\begin{figure}
	\begin{tikzpicture}
	\begin{axis}[
	xlabel={Localization error [\si{\meter}]},
	ylabel={Cumulative density function},
	xmin=0, xmax=50,
	ymin=0, ymax=1,
	legend entries={IV,SR-LS,Stansfield,DPD,DiSouL},
	legend pos=south east,
	legend cell align=left,
	cycle list name=myCycleList,
	]
	\addplot+[mark=none,dashdotdotted,very thick] table[
	x=IV,
	y=CDF,
	] {./Data/data_CDF.dat};
	\addplot+[mark=none,dashed,very thick] table[
	x=SR-LS,
	y=CDF,
	] {./Data/data_CDF.dat};
	\addplot+[mark=none,dotted,very thick] table[
	x=Stansfield,
	y=CDF,
	] {./Data/data_CDF.dat};
	\addplot+[mark=none,dashdotted,very thick] table[
	x=DPD,
	y=CDF,
	] {./Data/data_CDF.dat};
	\addplot+[mark=none,very thick] table[
	x=DiSouL,
	y=CDF,
	] {./Data/data_CDF.dat};
	\end{axis}
	\end{tikzpicture}
	\caption{Cumulative density function of the localization error for ${E}/{N_0}=\SI{20}{\decibel}$ and $B=\SI{30}{\mega\hertz}$.}
	\label{fig:CDF}
\end{figure}
Fig.~\ref{fig:CDF} plots the cumulative density function of the localization error. Clearly, DiSouL achieves high precision accuracy with high probability, followed by DPD and the two-step approaches. To gain a more in-depth understanding, we will focus on the performance of the estimators at sub-meter errors, as a function of ${E}/{N_0}$, bandwidth, number of antennas, channel properties, and calibration errors.

\begin{figure}
	\begin{tikzpicture}
	\begin{axis}[
		xlabel={${E}/{N_0}$ [\si{\decibel}]},
		ylabel={Probability of sub-meter accuracy},
		xmin=-5, xmax=35,
		ymin=0, ymax=1,
		legend entries={IV,SR-LS,Stansfield,DPD,DiSouL},
		legend pos=north west,
		legend cell align=left,
		cycle list name=myCycleList,
		legend style={font=\small},
	]
	\addplot table[
		x=SNR,
		y=IV,
	] {./Data/data_ProbVsSNR.dat};
	\addplot table[
		x=SNR,
		y=SR-LS,
	] {./Data/data_ProbVsSNR.dat};
	\addplot table[
		x=SNR,
		y=Stansfield,
	] {./Data/data_ProbVsSNR.dat};
	\addplot table[
		x=SNR,
		y=DPD,
	] {./Data/data_ProbVsSNR.dat};
	\addplot table[
		x=SNR,
		y=DiSouL,
	] {./Data/data_ProbVsSNR.dat};
	\end{axis}
	\end{tikzpicture}
	\caption{Probability of sub-meter precision vs.\ ${E}/{N_0}$ for $B=\SI{30}{\mega\hertz}$.}
	\label{fig:prob1m}
\end{figure}
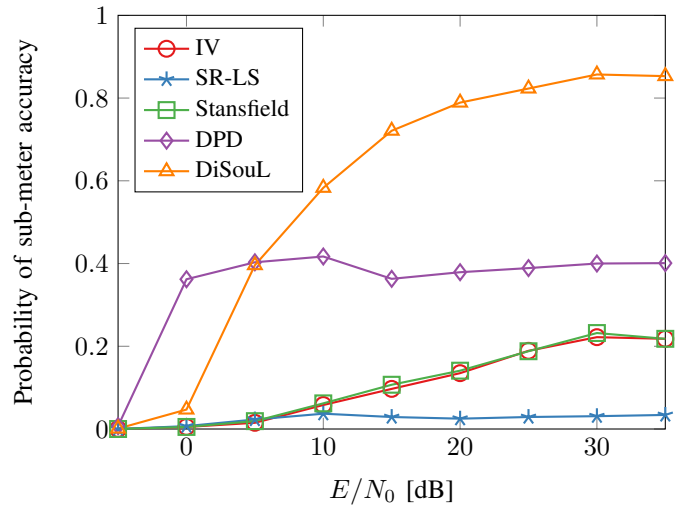
In Fig.~\ref{fig:prob1m}, the probability of sub-meter precision is shown as a function of ${E}/{N_0}$. 
Note that DiSouL outperforms all other techniques for most ${E}/{N_0}$ values. The TOA-based  SR-LS performs poorly due to the positive bias of the TOA estimates. The AOA-based estimators can slightly improve on this performance, but are still worse than both direct localization approaches. 
As ${E}/{N_0}$ increases, we sample the snapshots at the time of crossing a threshold rather than at the peak (see Section~\ref{sec:timing}), which reduces the amount of NLOS multipath components that are included into the snapshots, but the resulting ratio between LOS energy and noise is more or less independent of ${E}/{N_0}$. Thus, the benefit of increased ${E}/{N_0}$ is that we detect the signals sooner, thus diminishing the number of NLOS components making into the snapshots. However, this is is limited by the the resolution of the matched filter, and DiSouL's probability of sub-meter accuracy saturates over \SI{30}{\decibel} because of the limited time resolution of a digital matched filter.

\begin{figure}
	\begin{tikzpicture}
	\begin{axis}[
		xlabel={Bandwidth [\si{\mega\hertz}]},
		ylabel={Probability of sub-meter accuracy},
		xmin=10, xmax=100,
		ymin=0, ymax=1,
		legend entries={IV,SR-LS,Stansfield,DPD,DiSouL},
		legend pos=north east,
		legend cell align=left,
		legend style={at={(.97,.82)}},
		cycle list name=myCycleList,
		legend columns={2},
		legend style={font=\small},
	]
	\addplot table[
		x=BW,
		y=IV,
	] {./Data/data_ProbVsBandwidth.dat};
	\addplot table[
		x=BW,
		y=SR-LS,
	] {./Data/data_ProbVsBandwidth.dat};
	\addplot table[
		x=BW,
		y=Stansfield,
	] {./Data/data_ProbVsBandwidth.dat};
	\addplot table[
		x=BW,
		y=DPD,
	] {./Data/data_ProbVsBandwidth.dat};
	\addplot table[
		x=BW,
		y=DiSouL,
	] {./Data/data_ProbVsBandwidth.dat};
	\end{axis}
	\end{tikzpicture}
	\caption{Probability of sub-meter precision vs.\ bandwidth for ${E}/{N_0}=\SI{20}{\decibel}$.}
	\label{fig:BWprobab1m}
\end{figure}
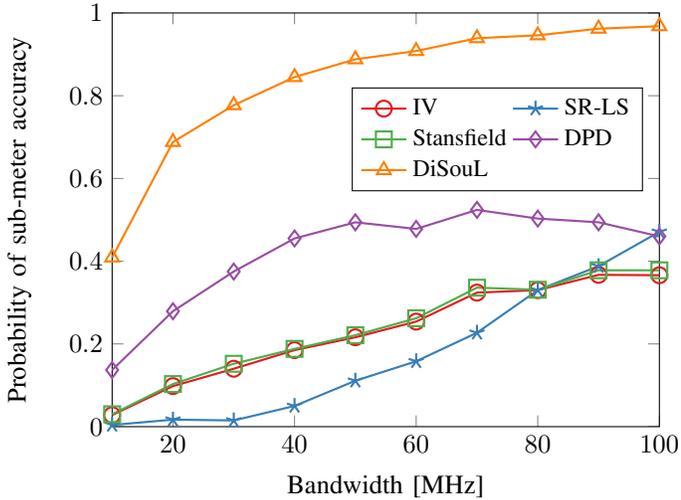
Fig.~\ref{fig:BWprobab1m} plots the probability of sub-meter accuracy versus signal bandwidth. All techniques benefit from an increase of bandwidth. On the one hand, it is well known that a larger bandwidth results in better TOA estimates. On the other hand, since the pulse width is inversely proportional to the bandwidth, a larger bandwidth results in a shorter pulse. Hence, fewer NLOS multipath components are included into the snapshots \eqref{eq:signal_model}, thus, decreasing the risk of errors in the AOA estimation.

\begin{figure}
	\begin{tikzpicture}
	\begin{axis}[
	xlabel={Number of antennas},
	ylabel={Probability of sub-meter accuracy},
	xmin=10, xmax=200,
	ymin=0, ymax=1,
	legend entries={IV,SR-LS,Stansfield,DPD,DiSouL},
	legend pos=north west,
	legend cell align=left,
	cycle list name=myCycleList,
	legend style={font=\footnotesize},
	]
	\addplot table[
	x=antennas,
	y=IV,
	] {./Data/data_numAntennas.dat};
	\addplot table[
	x=antennas,
	y=SR-LS,
	] {./Data/data_numAntennas.dat};
	\addplot table[
	x=antennas,
	y=Stansfield,
	] {./Data/data_numAntennas.dat};
	\addplot table[
	x=antennas,
	y=DPD,
	] {./Data/data_numAntennas.dat};
	\addplot table[
	x=antennas,
	y=DiSouL,
	] {./Data/data_numAntennas.dat};
	\end{axis}
	\end{tikzpicture}
	\caption{Probability of sub-meter precision vs.\ number of antennas at each base station for ${E}/{N_0}=\SI{20}{\decibel}$ and $B=\SI{30}{\mega\hertz}$.}
	\label{fig:numAntennas}
\end{figure}
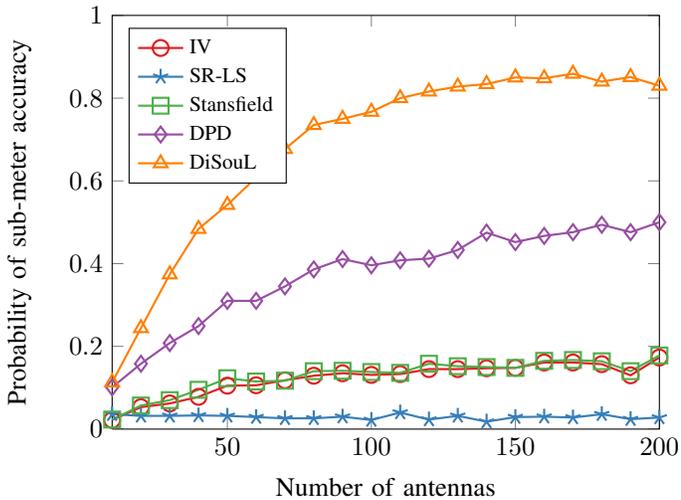
Fig.~\ref{fig:numAntennas} evaluates the probability of sub-meter accuracy versus the number of antennas in each base station. The size of the array in each base station grows with the number of antennas in order to maintain a constant average inter-antenna spacing. Due to the array size increase, the angular resolution at each base station also improves, which in turn allows DiSouL to better resolve multipath arrivals. On the contrary, the probability of sub-meter accuracy for the indirect techniques remains approximately the same. In particular, because SR-LS is purely TOA-based, the improvement in angular resolution has no impact. The other two indirect techniques, IV and Stansfield, improve very little because most of their errors are due to selection of the wrong path as LOS.

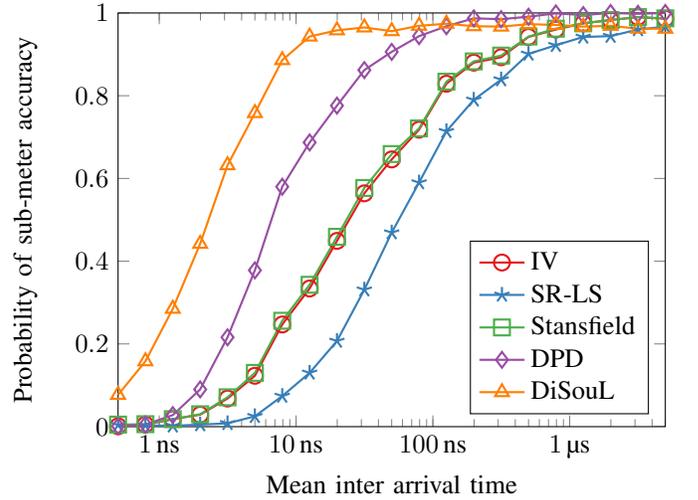
\begin{figure}
	\begin{tikzpicture}
	\begin{semilogxaxis}[
		xlabel={Mean inter arrival time},
		ylabel={Probability of sub-meter accuracy},
		xmin=5e-1, xmax=5e3,
		ymin=0, ymax=1,
		legend entries={IV,SR-LS,Stansfield,DPD,DiSouL},
		legend pos=south east,
		legend cell align=left,
		xtick={1,10,100,1e3,1e4},
		xticklabels={\SI{1}{\nano\second},\SI{10}{\nano\second},\SI{100}{\nano\second},\SI{1}{\micro\second},\SI{10}{\micro\second}},
		cycle list name=myCycleList,
	]
	\addplot table[
		x=time,
		y=IV,
	] {./Data/data_ArrivalTime.dat};
	\addplot table[
		x=time,
		y=SR-LS,
	] {./Data/data_ArrivalTime.dat};
	\addplot table[
		x=time,
		y=Stansfield,
	] {./Data/data_ArrivalTime.dat};
	\addplot table[
		x=time,
		y=DPD,
	] {./Data/data_ArrivalTime.dat};
	\addplot table[
		x=time,
		y=DiSouL,
	] {./Data/data_ArrivalTime.dat};
	\end{semilogxaxis}
	\end{tikzpicture}
	\caption{Probability of sub-meter precision vs.\ ray mean arrival time for ${E}/{N_0}=\SI{20}{\decibel}$ and $B=\SI{30}{\mega\hertz}$.}
	\label{fig:arrivalTime}
\end{figure}

In Fig.~\ref{fig:arrivalTime}, we tune some of the channel parameters controlling the rate of arrivals.  In the statistical multipath channel model of \cite{spencer2000modeling}, the times of arrival of the NLOS components are modeled by two parameters: the cluster arrival rate $\Lambda$ and the ray arrival rate $\lambda$. The measured values for the Clyde building of these two parameters were ${1}/{\Lambda}=\SI{17}{\nano\second}$ and ${1}/{\Lambda}=\SI{5}{\nano\second}$. In order to study the localization accuracy as a function of the ray arrival time, in Fig.~\ref{fig:arrivalTime}, ${1}/{\Lambda}$ is varied between \SI{5}{\pico\second} and \SI{5}{\micro\second} while $\Lambda=\frac{5}{17}\lambda$. As the ray inter arrival time increases, the multipath channel becomes less dense. For very high inter-arrival times, the channel can be considered almost pure LOS, and as expected all techniques improve their localization accuracy.

\begin{figure}
	\begin{tikzpicture}
	\begin{axis}[
		xlabel={Calibration error interval [\si{\degree}]},
		ylabel={Probability of sub-meter accuracy},
		xmin=0, xmax=120,
		ymin=0, ymax=1,
		legend pos=north east,
		legend cell align=left,
		legend style={text width=7em, at={(.95,-0.2)}},
		cycle list name=myCycleList,
		legend columns={2},
	]
	\addlegendimage{no markers,lines-1,thick}
	\addlegendentry{${E}/{N_0}=\SI{10}{\decibel}$  }
	\addlegendimage{only marks,mark=o,mark size=3}
	\addlegendentry{$\text{ B}=\SI{10}{\mega\hertz}$}
	\addlegendimage{no markers,lines-2,thick}
	\addlegendentry{${E}/{N_0}=\SI{20}{\decibel}$}
	\addlegendimage{only marks,mark=star,mark size=3}
	\addlegendentry{$\text{ B}=\SI{30}{\mega\hertz}$}
	\addlegendimage{no markers,lines-3,thick}
	\addlegendentry{${E}/{N_0}=\SI{30}{\decibel}$}
	\addlegendimage{only marks,mark=square,mark size=3}
	\addlegendentry{$\text{ B}=\SI{100}{\mega\hertz}$}
	\addplot+[lines-1,mark=o] table[
		x=calibration,
		y=1,
	] {./Data/data_Calibration.dat};
	\addplot+[lines-2,mark=o] table[
		x=calibration,
		y=2,
	] {./Data/data_Calibration.dat};
	\addplot+[lines-3,mark=o] table[
		x=calibration,
		y=3,
	] {./Data/data_Calibration.dat};
	\addplot+[lines-1,mark=star,thick] table[
		x=calibration,
		y=4,
	] {./Data/data_Calibration.dat};
	\addplot+[lines-2,mark=star,thick] table[
		x=calibration,
		y=5,
	] {./Data/data_Calibration.dat};
	\addplot+[lines-3,mark=star,thick] table[
		x=calibration,
		y=6,
	] {./Data/data_Calibration.dat};
	\addplot+[lines-1,mark=square] table[
		x=calibration,
		y=7,
	] {./Data/data_Calibration.dat};
	\addplot+[lines-2,mark=square] table[
		x=calibration,
		y=8,
	] {./Data/data_Calibration.dat};
	\addplot+[lines-3,mark=square] table[
		x=calibration,
		y=9,
	] {./Data/data_Calibration.dat};
	\end{axis}
	\end{tikzpicture}
	\caption{Comparison of DiSouL's localization accuracy in the presence of calibration errors for ${E}/{N_0}=\SI{20}{\decibel}$ and $B=\SI{30}{\mega\hertz}$. The calibration errors are modelled as additive phase noise distributed according to a uniform distribution $\mathcal{U}(-{I}/{2},{I}/{2})$ where $I$ is the calibration error interval, and are independent across antennas.}
	\label{fig:calibration}
\end{figure}
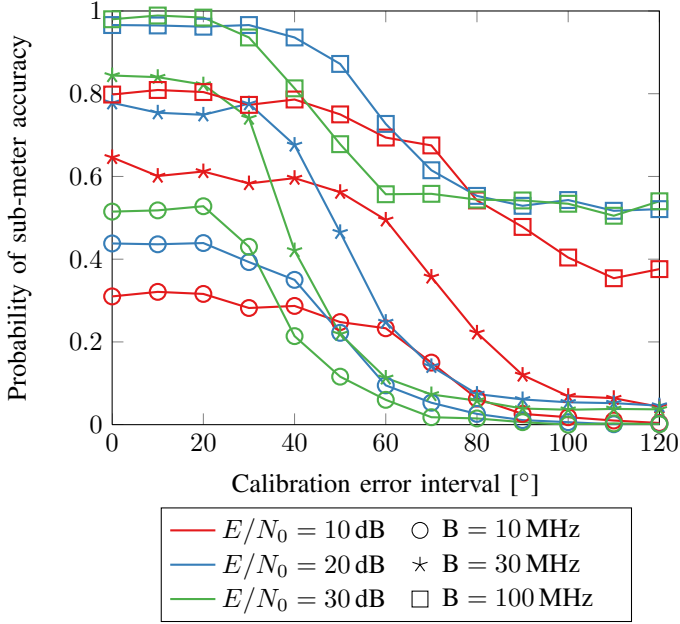
In Fig.~\ref{fig:calibration}, we analyze the effect of calibration errors in all arrays. The calibration errors are modelled as additive phase noise distributed according to a uniform distribution $\mathcal{U}(-{I}/{2},{I}/{2})$ where $I$ is the calibration error interval, and are independent across antennas.
For all tested values of bandwidth and ${E}/{N_0}$, DiSouL is robust to calibration errors intervals smaller than \SI{20}{\degree}. %We observe that the tolerance to calibration errors is higher at low $\nicefrac{E}{N_0}$ because the thermal noise becomes more relevant than the calibration errors.
At \SI{100}{\mega\hertz}, even when the calibration errors are very large (\SI{120}{\degree}), DiSouL's probability of sub-meter accuracy does not drop to zero thanks to the assistance provided by the TOA estimates.

Lastly, Table~\ref{tab:executionTime} plots the execution times of all techniques in a regular \SI{3.6}{\giga\hertz} desktop computer. Due to the joint processing of the data at all base stations, the execution times of the direct techniques are much larger than those of the indirect techniques. In particular, DiSouL is substantially more computationally intensive than the other techniques because it needs to solve a relatively large optimization problem multiple times. Instead of using an off-the-shelf solver \cite{Mosek2016}, it may be worth developing an algorithm tailored to the specific structure of DiSouL's optimization problem.
\begin{table}
	\centering
	\caption{Average execution times of the multiple localization techniques.}
	\begin{tabular}{llllll}%{p{.15\columnwidth}p{.12\columnwidth}p{.12\columnwidth}p{.12\columnwidth}p{.12\columnwidth}p{.12\columnwidth}}
		\toprule
		& 	IV    & SR-LS & Stansfield & DPD & DiSouL \\
		\midrule
		\parbox[c][6ex][c]{.13\columnwidth}{Average execution time} & \parbox[c][6ex][c]{.11\columnwidth}{\SI{0.38}{\milli\second}} & \parbox[c][6ex][c]{.11\columnwidth}{\SI{7.0}{\milli\second}} & \parbox[c][6ex][c]{.11\columnwidth}{\SI{0.22}{\milli\second}} & \parbox[c][6ex][c]{.11\columnwidth}{\SI{0.13}{\second}} & \parbox[c][6ex][c]{.11\columnwidth}{\SI{5.7}{\second}} \\
		\bottomrule
	\end{tabular}
	\label{tab:executionTime}
\end{table}

\section{Conclusions}

This paper tackled the problem of narrowband localization in the presence of multipath, through a direct localization approach in a massive MIMO setting. We propose an original compressive sensing approach for the localization of sources emitting known narrow-band signals. Due to the high angular resolution of massive arrays, it is possible to estimate the AOAs of the multipath components. By jointly processing snapshots of several widely distributed arrays, we are able to estimate the source location precisely without explicitly estimating the LOS AOAs, and therefore, avoiding the challenging data association problem. The proposed technique, called DiSouL, achieves sub-meter localization with high probability in dense multipath environments with narrow-band signals. DiSouL requires no statistical channel knowledge except for the noise variance and therefore it is suitable for any multipath environment. Coarse TOA estimates at each array are used to reduce the execution time and enhance the localization accuracy. Numerical simulations have shown that DiSouL is also very robust to synchronization errors, even though a better understanding is needed. The large gain in accuracy comes with higher computational complexity compared to previous existing techniques. 

\appendices

\section{Proof of Lemma~\ref{lemma:proof:minLpaths}} \label{proof:minLpaths}

We aim to prove that under A2) and A3), if $w > \sqrt{L-1}$, then any estimated location is consistent with $L$ paths (in the sense of Definition~\ref{def:consistency}). Here the point is that it is more costly (in terms of the objective function) to explain an observation as $L$ NLOS angles, than as one position with $L$ associated LOS angles, and this is so exactly when $w>\sqrt{L-1}$.

Let $\mathbf{X}$ and $\{\mathbf{y}\}_{l=1}^L$ be a solution from \eqref{opt:problem} with cost $C_1$, and let $\boldsymbol{\bm{\pi}}_1$ be an estimated location (i.e., $\sqrt{\sum_{l=1}^{L} \left| x_{1l} \right|^2}\neq 0$). Then $\hat{\mathbf{z}}_l$ can be expressed as
\begin{equation} \label{proof:minLpaths_1}
\hat{\mathbf{z}}_l = 
x_{1l} \mathbf{a}_l\left(\theta_l(\boldsymbol{\bm{\pi}}_1)\right)
+ y_{1 l} \mathbf{a}_l\left(\vartheta_{1l}\right)
+ \mathbf{e}_l, \quad l=1,\ldots,L,
\end{equation}
where 
$\vartheta_{1l}=\theta_l(\boldsymbol{\bm{\pi}}_1)$
and $\mathbf{e}_l$ is placeholder of all the terms in the reconstruction $\hat{\mathbf{z}}_l$ that are not related to the location $\boldsymbol{\bm{\pi}}_1$ or angle $\vartheta_{1l}$
\begin{equation} \label{proof:minLpaths_2}
\mathbf{e}_l = 
\sum_{q>1} x_{ql} \mathbf{a}_l\left(\theta_l(\boldsymbol{\bm{\pi}}_q)\right)
+\sum_{m>1}  y_{m l} \mathbf{a}_l\left(\vartheta_{ml}\right) \quad{l=1,\ldots,L} .
\end{equation}
Now, if $\left\|\mathbf{x}_{1}\right\|_0 = L$, then $\boldsymbol{\bm{\pi}}_1$ is consistent with $L$ paths in the sense of Definition~\ref{def:consistency} due to Assumption {A3)}. Here, $\mathbf{x}_{1}=[x_{11}\cdots x_{1L}]^\text{T}$ and $\|\cdot\|_0$ is the $\ell_0$-norm which counts the number of estimated elements.
Hence, we must prove that $w > \sqrt{L-1}$ implies $\left\|\mathbf{x}_{1}\right\|_0 = L$. The proof is by contradiction.

Assume that
\begin{equation} \label{proof:minLpaths_3.5}
\left\|\mathbf{x}_{1}\right\|_0 <L.
\end{equation}
This means that the position is consistent with less than $L$ paths. Now we have another competing reconstruction $\mathbf{X}'$, $\{\mathbf{y}'\}_{l=1}^L$ and
\begin{equation} \label{proof:minLpaths_4}
\hat{\mathbf{z}}_l = 
\underbrace{\left(x_{1l} +y_{1l}\right)}_{\doteq y'_{1l}} \mathbf{a}_l\left(\vartheta_{1l}\right)
+ \mathbf{e}_l \quad\text{for }l=1,\ldots,L,
\end{equation}
with $\mathbf{x}'_{1}=0$ and cost $C_2$. Since $\mathbf{X}$ and $\{\mathbf{y}\}_{l=1}^L$ are optimal, $C_1\leq C_2$.
Consider now the two assignments \eqref{proof:minLpaths_1} and \eqref{proof:minLpaths_4}. Ignoring any common coefficients, the cost of \eqref{proof:minLpaths_1} is
\begin{equation} \label{proof:minLpaths_5}
C_1= w \sqrt{\sum_{l=1}^{L} \left|x_{1l}\right|^2} + \sum_{l=1}^{L}\left|y_{1l}\right|
\end{equation}
whereas the cost of \eqref{proof:minLpaths_4} is 
\begin{equation} \label{proof:minLpaths_6}
C_2=  \sum_{l=1}^{L}\left| y'_{1 l} \right| = \sum_{l=1}^{L}\left|x_{1l}+y_{1l}\right| \le \sum_{l=1}^{L}\left|x_{1l}\right| +\sum_{l=1}^{L}\left|y_{1l}\right|.
\end{equation}
Since $C_1 \leq C_2$,
\begin{equation} \label{proof:minLpaths_7}
w \sqrt{\sum_{l=1}^{L}\left|x_{1l}\right|^2} \leq
\sum_{l=1}^{L}\left|x_{1l}\right|.
\end{equation}
Define the vector function $\mathbbm{1}_{\mathbf{x}}$ whose $l$-th entry is 1 if $x_{1l}\neq0$, and 0 otherwise, and denote by $\tilde{\mathbf{x}}$ the element-wise absolute value of $\mathbf{x}_{1}$, i.e., $\tilde{{x}}_l=|{x}_{1l}|$.  Then, $\Vert \mathbf{x}_{1} \Vert_1 = \mathbbm{1}_{\mathbf{x}}^\text{T} \tilde{\mathbf{x}}$, so from the Cauchy-Schwarz inequality, it follows immediately that
\begin{equation} \label{proof:minLpaths_7bis}
	\Vert \mathbf{x}_{1} \Vert_1 \le \sqrt{\Vert \mathbf{x}_{1} \Vert_0} \Vert \mathbf{x}_{1} \Vert_2	.
\end{equation}
Putting everything together, we find the following contradiction
\begin{equation} \label{proof:minLpaths_8}
\begin{split}
	w\Vert\mathbf{x}_{1} \Vert_2 &\overset{\eqref{proof:minLpaths_7}}{\leq} \Vert \mathbf{x}_{1} \Vert_1 \overset{\eqref{proof:minLpaths_7bis}}{\leq} \sqrt{\Vert \mathbf{x}_{1} \Vert_0} \Vert \mathbf{x}_{1} \Vert_2 \\
	&\overset{\eqref{proof:minLpaths_3.5}}{\leq} \sqrt{L-1}\Vert \mathbf{x}_{1} \Vert_2 \overset{(a)}{<} w \Vert\mathbf{x}_{1}\Vert_2,
\end{split}
\end{equation}
where $(a)$ is due to the fact that $w > \sqrt{L-1}$. Hence, $w > \sqrt{L-1}$ implies $\left\|\mathbf{x}_{1}\right\|_0 =L$.

\section{Proof of Lemma~\ref{lemma:min1source}} \label{proof:min1source}

If no location were found, for each possible location just enough ``mass'' from each NLOS detected observation could be moved, in a certain way, over to LOS; the new cost cannot exceed the nominal cost if $w<\sqrt{L}$. 

The proof is by contradiction. Assume that $w < \sqrt{L}$ and that there is no estimated location output by problem \eqref{opt:problem}, so that $x_{ql}=0$, $\forall q, l$. Then, 
\begin{equation} \label{proof:min1source_1}
\hat{\mathbf{z}}_l =		
\sum_{m} y_{ml} \mathbf{a}_l\left(\vartheta_{ml}\right) \quad{l=1,\ldots,L}.
\end{equation}
Assume without loss of generality that $\vartheta_{1l}=\theta_l(\mathbf{p})$.
By Assumption {A3)}, $\theta_l(\mathbf{p})\in\hat{\Theta}_l$, so that $y_{1l} \neq 0$, which leads to the following decomposition
\begin{equation} \label{proof:min1source_5}
\hat{\mathbf{z}}_l =
y_{ 1 l} \mathbf{a}_l\left(\vartheta_{1l}\right) +	
\sum_{m>1} y_{m l} \mathbf{a}_l\left(\vartheta\right) \quad{l=1,\ldots,L}.
\end{equation}
We consider a competing decomposition $\mathbf{X}'$, $\{\mathbf{y}'\}_{l=1}^L$, for which $x'_{ql}\neq 0$ for some $q,l$. In particular, $\boldsymbol{\bm{\pi}}_1=\mathbf{p}$; then 
\begin{multline} \label{proof:min1source_6}
	\hat{\mathbf{z}}_l =
	x'_{1l} \mathbf{a}_l\left(\theta_l(\boldsymbol{\bm{\pi}}_1)\right) +
	\underbrace{\left( y_{1 l} -x'_{1l}\right)}_{\doteq y'_{1l}} \mathbf{a}_l\left(\vartheta_{1l}\right)
	+\\+
	\sum_{m>1} y'_{ml} \mathbf{a}_l\left(\vartheta_{ml}\right) \qquad{l=1,\ldots,L},
\end{multline}
where $y'_{ml}=y_{ml}$ for all $l$ and $m>1$.
Ignoring common terms, we can associate a cost $C_1$ and $C_2$ with \eqref{proof:min1source_5} and \eqref{proof:min1source_6}, respectively, where
\begin{align} \label{proof:min1source_8}
C_1 &= \sum_{l=1}^{L}\left|y_{1 l} \right| \\
C_2 &= w\sqrt{\sum_{l=1}^{L}\left|x'_{1l}\right|^2} +
\sum_{l=1}^{L}\left| y_{1l}-x'_{1l}\right|.
\end{align}
If we select $x'_{1l}$ such that $|x'_{1l}| = \min_{l} |y_{1l}|$ and $\angle x'_{1l} = \angle y_{1l}$, and utilize the fact that $C_1\le C_2$, we have 
\begin{equation}
\sum_{l=1}^{L}\left|y_{1l} \right|  \le  w \sqrt{L} \min_{l} |y_{1l}| 
+\sum_{l=1}^{L}| y_{1l}| - L \min_{l} |y_{1l}|,
\end{equation}
implying that $w \sqrt{L} -L \ge 0$, which contradicts $w < \sqrt{L}$.

\bibliographystyle{IEEEtran}
\bibliography{IEEEabrv,../references}

\end{document}